\def\op#1{\mathop{{\it\fam0} #1}\limits}
\newcommand{\beq}{\begin{equation}}
\newcommand{\eeq}{\end{equation}}
\newcommand{\ben}{\begin{eqnarray}}
\newcommand{\een}{\end{eqnarray}}
\newcommand{\be}{\begin{eqnarray*}}
\newcommand{\ee}{\end{eqnarray*}}
\newcommand{\gH}{{\mathfrak h}}
\newcommand{\al}{\alpha}
\newcommand{\bt}{\beta}
\newcommand{\la}{\lambda}
\newcommand{\m}{\mu}
\newcommand{\g}{\gamma}
\newcommand{\vr}{\varrho}
\newcommand{\cG}{{\mathfrak g}}
\newcommand{\si}{\sigma}
\newcommand{\Si}{\Sigma}
\newcommand{\w}{\wedge}
\newcommand{\wt}{\widetilde}
\newcommand{\ol}{\overline}
\newcommand{\dr}{\partial}
\newcommand{\ar}{\op\longrightarrow}
\newcommand{\ot}{\otimes}
\newcommand{\id}{{\mathrm{Id}\,}}
\newcommand{\cD}{{\mathcal D}}
\newcommand{\cA}{{\mathcal A}}
\newenvironment{eqalph}{\stepcounter{equation}
\setcounter{equationa}{\value{equation}} \setcounter{equation}{0}

\begin{eqnarray}}{\end{eqnarray}\setcounter{equation}{\value{equationa}}}
\newcounter{equationa}
\newcounter{remark}
\newcounter{theorem}
\newcounter{proposition}
\newcounter{lemma}
\newcounter{corollary}
\def\theremark{\arabic{remark}}
\def\thetheorem{\arabic{theorem}}
\newenvironment{proof}{\noindent {\textit{Proof:}}
}{$\Box$\medskip}
\newenvironment{remark}{\refstepcounter{remark} \medskip \noindent{\bf
Remark \theremark.} }{ \medskip }
\newenvironment{theorem}{\refstepcounter{theorem} \medskip \noindent{\sc
Theorem \thetheorem.}\it}{\medskip }
\newenvironment{proposition}{\refstepcounter{theorem} \medskip\noindent{\sc
Proposition \thetheorem.}\it}{\medskip }
\newenvironment{corollary}{\refstepcounter{theorem} \medskip \noindent{\sc
Corollary \thetheorem.} \it}{\medskip }
\begin{document}

\hbox{}

\begin{center}

{\large\bf Classical Higgs fields}

\bigskip

{\bf G. Sardanashvily} \medskip

Moscow State University, Russia

Lepage Research Institute, Czech Republic

\end{center}

\bigskip

\begin{abstract}

We consider classical gauge theory on a principal bundle $P\to X$
in a case of spontaneous symmetry breaking characterized by the
reduction of a structure group $G$ of $P\to X$ to its closed
subgroup $H$. This reduction is ensured by the existence of global
sections of the quotient bundle $P/H \to X$ treated as classical
Higgs fields. Matter fields with an exact symmetry group $H$ in
such gauge theory are considered in the pairs with Higgs fields,
and they are represented by sections of a composite bundle $Y\to
P/H\to X$, where $Y\to P/H$ is a fiber bundle associated to a
principal bundle $P \to P/H$ with a structure group $H$. A key
point is that a composite bundle $Y\to X$ is proved to be
associated to a principal $G$-bundle $P \to X$. Therefore, though
matter fields possess an exact symmetry group $H\subset G$, their
gauge $G$-invariant theory in the presence of Higgs fields can be
developed. Its gauge invariant Lagrangian factorizes through the
vertical covariant differential determined by a connection on a
principal $H$-bundle $P\to P/H$. In a case of the Cartan
decomposition of a Lie algebra of $G$, this connection can be
expressed in terms of a connection on a principal bundle $P \to
X$, i.e., gauge potentials for a group of broken symmetries $G$.
\end{abstract}

\section{Introduction}

In general, classical gauge theory comprises fields of three
types: gauge potentials, matter fields, and classical Higgs
fields. The latter are attributes of gauge theory in a case of
spontaneous symmetry breaking.

Spontaneous symmetry breaking is a quantum phenomenon when
automorphism of a quantum algebra need not preserve its vacuum
state \cite{pref,sard15}. In this case, we have inequivalent
vacuum states of a quantum system which are classical objects. For
instance, spontaneous symmetry breaking in Standard Model of
particle physics is ensured by the existence of a constant vacuum
Higgs field \cite{nov,SM}.

Classical field theory adequately is formulated in terms of
Lagrangian theory on smooth fiber bundles whose sections are
classical fields \cite{sard08,book09}. Correspondingly, classical
gauge theory is classical field theory on principal and associated
bundles \cite{book09,book00}. In  gauge theory on a principal
bundle $P\to X$, spontaneous symmetry breaking pertains to
reducing its structure Lie group $G$ to a closed subgroup $H$ of
exact symmetries \cite{iva,tra,keyl,sard92}. Such a reduction is
possible iff the quotient fiber bundle $P/H\to X$ admits global
sections $h$ (Theorem \ref{redsub}). These sections can be
interpreted as classical Higgs fields
\cite{iva,tra,sard92,sard06a,tmf14}. They parameterize the
principal reduced subbundles $P^h$ (with a structure group $H$) of
a principal bundle $P$. These subbundles are inequivalent (Remark
\ref{tmp10}) and need not be isomorphic (Theorems \ref{iso0} --
\ref{iso1}).

If a structure group $G$ of a principal bundle $P\to X$ is reduced
to a closed subgroup $H$, then in the framework of this gauge
theory, we can consider matter fields with an exact symmetry group
$H$. They are described by sections $s_h$ of fiber bundles $Y^h$
which possess a typical fiber $V$ endowed with the left action of
a group $H$, and which are associated to principal reduced
subbundles $P^h\subset P$. Because the subbundles $P^h$ for
different $h$ fails to be equivalent, such matter fields can enter
the theory only in a pair with a certain Higgs field $h$. A
problem of describing the set of all pairs $(s_h,h)$ of matter and
Higgs fields thus arises. These pairs are represented by sections
of the composite bundle $Y\to P/H\to X$ (\ref{b3225}), where $Y\to
P/H$ is a fiber bundle with a structure group $H$ and a typical
fiber $V$, and this fiber bundle is associated to a principal
$H$-bundle $P\to P/H$ (Section 5). The geometry of such composite
bundles has been studied in \cite{book09,sard06a,tmf14}. A key
observation is that, for any section $h$ of the quotient bundle
$P/H\to X$, the pull-back $h^*Y$ of a fiber bundle $Y\to P/H$ is a
subbundle $Y^h$ of $Y\to X$ which is associated to a principal
reduced subbundle $P^h\subset P$ with a structure group $H$. Its
sections $s_h$ correspond to matter fields in the presence of a
background Higgs field $h$.

Following \cite{tmf14}, we here prove that a composite bundle
$Y\to X$ is a $P$-associated bundle with a structure group $G$
(Theorem \ref{LL1}). This allows describing matter fields with an
exact symmetry group $H$ in terms of gauge theory on a principal
bundle $P$ (Section 6). A key point is that a Lagrangian of these
matter fields factorizes through the vertical covariant
differential $\wt D$ (\ref{7.10}), determined by an $H$-connection
on a fiber bundle $Y\to P/H$. The restriction $A_h$ of this
connection to a subbundle $Y^h\subset Y$ then becomes an
$H$-connection on this subbundle (Proposition \ref{LL4}), and the
restriction $Y^h$ of the vertical covariant differential $\wt D$
does the differential covariant with respect to a connection $A_h$
(Proposition \ref{tmp40}).

A problem however is that a connection on a fiber bundle $Y\to
P/H$ is not a dynamical variable in gauge theory. We therefore
assume that the Lie algebra of a group $G$ admits the Cartan
decomposition (\ref{g13}). In this case, any $G$-connection on a
principal bundle $P\to X$ yields an $H$-connection on any reduced
subbundle $P^h$ (Theorem \ref{redt}) and, therefore, induces a
desired $H$-connection on a fiber bundle $Y\to P/H$ (Theorem
\ref{tmp50}). On the configuration space (\ref{lvv}), this results
in gauge theory of gauge potentials of a group $G$, of matter
fields with an exact symmetry subgroup $H\subset G$, and of
classical Higgs fields.

For instance, this is the case of gravitation theory on an
oriented 4-dimensional manifold $X$. It is formulated as gauge
theory with spontaneous symmetry breaking on the principal bundle
$LX$ of linear frames tangent to $X$ with a structure group
$GL^+(4,\mathbb R)$ reduced to the Lorentz group $SO(1,3)$
\cite{iva,gg,tmf,sard11}. Global sections of the corresponding
quotient bundle $LX/SO(1,3)$ are pseudo-Riemannian metrics on the
manifold $X$, which are identified with gravitational fields in
General Relativity. The underlying physical reason of this
spontaneous symmetry breaking is both the geometric Equivalence
principle and the existence of Dirac spinor fields with the
Lorentz spin group of symmetries.

\section{Gauge theory on principal bundles}

We consider smooth fiber bundles (of class $C^\infty$). Smooth
manifolds throughout are assumed to be separable, locally compact,
countable at infinity, paracompact topological spaces.

As already mentioned, we formulate classical gauge theory on a
principal bundle
\beq
\pi_P:P \to X
\label{51f1}
\eeq
over an $n$-dimensional manifold $X$ with a structure Lie group
$G$ acting on $P$ on the right fiberwise freely and transitively
\cite{book09,book00}. For brevity, we call $P$ the principal
$G$-bundle. Its atlas
\beq
\Psi_P=\{(U_\al, z_\al),\vr_{\al\bt}\} \label{51f2}
\eeq
is defined by a family of local sections $z_\al$ with $G$-valued
transition functions $\vr_{\al\bt}$, such that
$z_\bt(x)=z_\al(x)\rho_{\al\bt}(x)$, $x\in U_\al\cap U_\bt$.

Because $G$ acts on $P$ on the right, one considers the quotient
bundles
\beq
 T_GP=TP/G,\qquad  V_GP=VP/G \label{b1.205}
\eeq
over $X$. A typical fiber of the bundle $V_GP\to X$ is the right
Lie algebra $\cG_r$ of a group $G$ with the basis $\{e_p\}$ on
which $G$ acts by the adjoint representation. Sections of the
fiber bundles $T_GP\to X$ and $V_GP\to X$ (\ref{b1.205}) are
$G$-invariant vector fields and vertical $G$-invariant vector
fields on $P$, respectively.

In a general setting, a connection on a principal bundle $P\to X$
is defined as a section of a jet bundle $J^1P\to P$, where $J^1P$
is the jet manifold of a fiber bundle $P\to X$ \cite{book00,sau}.
Because connections on a principal bundle are assumed to be
equivariant with respect to the structure group action (for
brevity, we call them $G$-connections), they are identified to
global sections of the quotient bundle
\beq
C=J^1P/G\to X, \label{tmp30}
\eeq
coordinated by $(x^\m,a^p_\m)$. This is an affine bundle modelled
over a vector bundle $T^*X\ot_X V_GP$. Because of the canonical
embedding
\be
C\ar_X dx^\m\ot(\dr_\m + a_\m^p e_p)\in T^*X\op\ot_X T_GP,
\ee
$G$-connections on $P$ can also be represented in terms of
$T_GP$-valued forms
\beq
A=dx^\la\ot (\dr_\la + A_\la^p e_p).  \label{1131}
\eeq

Let $V$ be a manifold admitting a left action of a stricture group
$G$ of the principal bundle $P$ (\ref{51f1}). a fiber bundle
associated to $P$ with a typical fiber $V$ is then defined as the
quotient space
\beq
Y=(P\times V)/G, \qquad (p,v)/G=(pg, g^{-1}v)/G, \qquad g\in G.
\label{tmp31}
\eeq
For brevity, we call it a $P$-associated bundle.

Every atlas $\Psi_P$ (\ref{51f2}) of a principal bundle $P$
determines an atlas
\beq
\Psi_Y=\{(U_\al,\psi_\al)\}, \qquad \psi_\al(x): (z_\al(x),v)/G\to
v, \label{tmp32}
\eeq
of the associated bundle $Y$ (\ref{tmp31}), and endows $Y$ with
fiberwise coordinates $(x^\la,y^i)$.

Every $G$-connection $A$ (\ref{1131}) on a principal bundle $P$
yields a connection
\beq
A=dx^\la\ot(\dr_\la +A^p_\la I^i_p\dr_i) \label{tmp33}
\eeq
on an associated bundle $Y$, where $\{I_p\}$ is a representation
of the Lie algebra $\cG_r$ in a typical fiber $V$.

\section{Reduced structures and Higgs fields}

Let $H$, dim$H>0$, be a closed subgroup of a structure group $G$
(see Remark \ref{52e1} below). There is a composite fiber bundle
\beq
P\to P/H\to X, \label{b3223a}
\eeq
where
\beq
 P_\Si=P\ar^{\pi_{P\Si}} P/H \label{b3194}
\eeq
is a principal bundle with a structure group $H$ and
\beq
\Si=P/H\ar^{\pi_{\Si X}} X \label{b3193}
\eeq
is a $P$-associated bundle with a typical fiber $G/H$ on which a
structure group $G$ acts on the left.

\begin{remark}  \label{52e1} A closed subgroup $H$ of a Lie group $G$
is a Lie group. We consider the quotient space $G/H$ of a group
$G$ with respect to the right action of $H$ on $G$. One can show
that
\beq
\pi_{GH}:G\to G/H \label{ggh}
\eeq
is a principal bundle with a structure group $H$ \cite{ste}. In
particular, if $H$ is a maximal compact subgroup of $G$, then the
quotient space $G/H$ is diffeomorphic to $\mathbb R^m$, and the
fiber bundle (\ref{ggh}) is trivial.
\end{remark}

A structure Lie group $G$ of a principal bundle $P$ is said to be
reduced to its closed subgroup $H$ if the following equivalent
conditions are satisfied:

$\bullet$ a principal bundle $P$ admits the atlas (\ref{51f2})
with $H$-valued transition functions $\vr_{\al\bt}$;

$\bullet$ there exists a principal reduced subbundle $P_H$ of a
principal bundle $P$ with a structure group $H$.

Indeed, if $P_H\subset P$ is a reduced subbundle, then its atlas
(\ref{51f2}), generated by local sections $z_\al$ also is an atlas
of a principal bundle $P$ with $H$-valued transition functions.
Conversely, let (\ref{51f2}) be an atlas of a principal bundle $P$
with $H$-valued transition functions $\vr_{\al\bt}$. For any $x\in
U_\al\subset X$, we define a submanifold $z_\al(x)H\subset P_x$.
These submanifolds constitute an $H$-subbundle of $P$ because
$z_\al(x)H=z_\bt(x)H\vr_{\bt\al}(x)$ on intersections $U_\al\cap
U_\bt$.

\begin{remark}
Principal reduced $H$-subbundles of a principal $G$-bundle
sometimes are called the $G$-structures
\cite{sard92,kob72,zul,gor}. In \cite{kob72,gor}, only reduced
structures of a principal bundle $LX$ of linear frames in the
tangent bundle $TX$ of a manifold X were considered, and the
isomorphism class of these structures was confined to holonomy
automorphisms of $LX$, i.e., to functorial lift onto $LX$ of
diffeomorphisms of a base $X$. A notion of the $G$-structure was
extended to an arbitrary fiber bundle in \cite{zul}, where it was
interpreted as the Klein–-Chern geometry. In a case where the Lie
algebra of a group $G$ admits the Cartan decomposition
(\ref{g13}), the $G$-structure is said to be reduced \cite{god},
and it manifests several additional features (Theorem \ref{redt}).
\end{remark}

A key point is the following \cite{kob}.

\begin{theorem}\label{redsub}
There is one-to-one correspondence
\beq
P^h=\pi_{P\Si}^{-1}(h(X)) \label{510f2}
\eeq
between the principal reduced $H$-subbundles $i_h:P^h\to P$ of a
principal bundle $P$ and the global sections $h$ of the quotient
bundle $P/H\to X$ (\ref{b3193}).
\end{theorem}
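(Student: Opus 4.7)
The plan is to establish the bijection (\ref{510f2}) by constructing explicit maps in both directions and verifying they are mutually inverse. The conceptual observation I would lean on is that a principal reduced $H$-subbundle of $P \to X$ amounts to a coherent smooth choice of a single $H$-orbit inside each fiber of $P$, and $H$-orbits in $P_x$ are in natural bijection with the fiber $(P/H)_x$ of the quotient bundle $\Si$; so such a choice over all of $X$ is precisely a global section of $P/H \to X$.

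For the forward direction, given a global section $h$ of $\Si$ (\ref{b3193}), I would simply set $P^h := \pi_{P\Si}^{-1}(h(X))$. Since $\pi_{P\Si}: P \to P/H$ is a principal $H$-bundle (\ref{b3194}) and $h$ is a smooth section of $\pi_{\Si X}$, the image $h(X)$ is an embedded closed submanifold of $P/H$ diffeomorphic to $X$, and its preimage under the submersion $\pi_{P\Si}$ is an embedded submanifold of $P$. The right $H$-action preserves $P^h$ because $\pi_{P\Si}$ is $H$-invariant, and the composition $\pi_{\Si X}\circ\pi_{P\Si}$ restricted to $P^h$ coincides with $\pi_P|_{P^h}$; these together confirm that $P^h \to X$ is a principal $H$-subbundle of $P \to X$.

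For the reverse direction, given a principal reduced $H$-subbundle $i_h: P^h \hookrightarrow P$, the fiber $P^h_x$ is a single $H$-orbit in $P_x$, hence $\pi_{P\Si}(P^h_x)$ is a single point in $(P/H)_x$. Setting $h(x) := \pi_{P\Si}(P^h_x)$ defines a section of $P/H \to X$, and smoothness follows since, for any local section $z_\al$ of $P^h$, one has $h|_{U_\al} = \pi_{P\Si}\circ z_\al$. To see that the two constructions invert each other: starting from $h$, building $P^h = \pi_{P\Si}^{-1}(h(X))$, and extracting its induced section returns $x\mapsto \pi_{P\Si}(\pi_{P\Si}^{-1}(h(x))) = h(x)$; conversely, starting from $P^h$, its induced section has image consisting of exactly the $H$-orbits comprising $P^h$, so its preimage under $\pi_{P\Si}$ returns $P^h$.

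I do not expect a serious obstacle here. The only step needing a moment's care is the embedded-submanifold structure of $P^h$, which reduces to the standard fact that preimages of embedded submanifolds under surjective submersions are embedded submanifolds; this applies because $\pi_{P\Si}$ is a principal-bundle projection. The theorem is essentially an unpacking of the tower (\ref{b3223a}) combined with the identification of fibers of $P/H \to X$ with the $H$-orbit space of the corresponding fiber of $P \to X$.
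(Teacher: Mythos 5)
Your argument is correct and complete. Note that the paper itself offers no proof of this theorem: it is stated as a known result with a citation to Kobayashi--Nomizu, so there is no in-paper argument to compare against. Your construction is the standard one and matches what the paper uses implicitly in the surrounding discussion, namely that $P^h=\pi_{P\Si}^{-1}(h(X))$ is the restriction $h^*P_\Si$ of the principal $H$-bundle $P_\Si\to\Si$ to $h(X)$, which immediately gives the principal $H$-bundle structure on $P^h\to X$ in your forward direction. Both directions and the verification that they are mutually inverse are sound; the two small points that need care --- that each fiber of a principal $H$-subbundle is a single $H$-orbit (so $\pi_{P\Si}$ collapses it to a point), and that $\pi_{P\Si}^{-1}(h(x))$ is itself a single $H$-orbit containing $P^h_x$ and hence equal to it --- are both present or immediate from what you wrote. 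In short, you have supplied a correct proof of a statement the paper leaves to the literature.
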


The relation (\ref{510f2}) implies that a principal reduced
$H$-subbundle $P^h$ is the restriction $h^*P_\Si$ of the principal
$H$-bundle $P_\Si$ (\ref{b3194}) to a submanifold $h(X)\subset
\Si$. At the same time, every atlas $\Psi_h$ of a principal bundle
$P^h$ generated by a family of its local sections also is an atlas
of a principal $G$-bundle $P$ and an atlas of the $P$-associated
bundle $\Si\to X$ (\ref{b3193}) with $H$-valued transition
functions. Relative to an atlas $\Psi_h$ of a fiber bundle $\Si$,
a global section $h$ of this bundle takes its values in the center
of the quotient space $G/H$.

As already mentioned, we treat global sections of the quotient
bundle $P/H\to X$ as classical Higgs fields in classical gauge
theory \cite{book09,sard92,tmf14}.

Reducing a structure group is not always possible. In particular,
in an above mentioned case of gauge gravitation theory, it occurs
on noncompact manifolds $X$ and on compact manifolds with the zero
Euler characteristic. We note the following fact \cite{ste}.

\begin{theorem} \label{tmp1}  A bundle $Y\to X$
whose typical fiber is diffeomorphic to a manifold $\mathbb R^m$
always admits a global section, and every section of it over a
closed submanifold of a base $X$ can be extended  to the global
one.
\end{theorem}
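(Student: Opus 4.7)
The plan is to build the global section by an inductive extension argument over a locally finite cover, using at each step that a continuous (or smooth) $\mathbb{R}^m$-valued function on a closed subset of a chart extends to the whole chart. The contractibility of the fiber is the decisive feature: since $\mathbb{R}^m$ is homeomorphic (and diffeomorphic) to a convex set, there is no obstruction to extending partial sections.

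First I would exploit paracompactness of $X$ to choose a countable, locally finite open cover $\{U_i\}_{i\in\mathbb N}$ over which $Y\to X$ trivializes via $\psi_i : Y|_{U_i}\to U_i\times \mathbb{R}^m$, together with a shrinking $\{V_i\}$ satisfying $\overline{V_i}\subset U_i$ and $\bigcup V_i=X$. Let $A\subset X$ be the given closed submanifold and $s_A:A\to Y$ the prescribed section. Set $A_0=A$ and $A_n=A\cup\overline{V_1}\cup\cdots\cup\overline{V_n}$, which is closed in $X$ by local finiteness of the cover.

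The inductive step extends a section $s_n:A_n\to Y$ (starting from $s_0=s_A$) to $s_{n+1}:A_{n+1}\to Y$ with $s_{n+1}|_{A_n}=s_n$. In the trivialization $\psi_{n+1}$, the restriction $s_n|_{A_n\cap U_{n+1}}$ corresponds to a continuous map $f:A_n\cap U_{n+1}\to \mathbb{R}^m$ defined on a closed subset of $U_{n+1}$. By the Tietze extension theorem applied componentwise, $f$ admits a continuous extension $\tilde f:U_{n+1}\to\mathbb{R}^m$, which pulls back through $\psi_{n+1}^{-1}$ to a section $\tilde s_{n+1}$ on $U_{n+1}$. Then $s_{n+1}$ is defined to equal $s_n$ on $A_n$ and $\tilde s_{n+1}$ on $\overline{V_{n+1}}$; these agree on the overlap $A_n\cap \overline{V_{n+1}}$ and hence glue to a continuous section on the closed set $A_{n+1}$. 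Local finiteness ensures that for each $x\in X$ the sequence $s_n(x)$ stabilizes after finitely many steps, so the direct limit $s=\varinjlim s_n$ is a well-defined global section extending $s_A$. Taking $A=\emptyset$ in this argument gives the first assertion.

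The main obstacle is upgrading continuity to smoothness, since Tietze only yields a continuous extension. For this I would replace Tietze by the Whitney extension theorem (a smooth map on a closed submanifold extends smoothly to a neighborhood) to produce $\tilde s_{n+1}$ near the closed subset $A_n\cap U_{n+1}$, and then use a smooth bump function subordinate to the shrinking $\overline{V_{n+1}}\subset U_{n+1}$ to interpolate, in the affine structure of the fiber $\mathbb{R}^m$, between the previously defined section and the new smooth extension. The use of the affine (linear-convex) structure of $\mathbb{R}^m$ is exactly the place where the hypothesis on the typical fiber enters, and the local finiteness of the cover guarantees that the resulting infinite gluing is locally a finite operation, hence smooth.
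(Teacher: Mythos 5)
The paper offers no proof of this statement at all: it is quoted as a known fact with a bare reference to Steenrod \cite{ste}, where it appears as the section-extension theorem for bundles whose fiber is a ``solid'' space. Your argument is, in substance, exactly Steenrod's proof in the continuous category --- countable locally finite trivializing cover with a shrinking, Tietze extension of the fiber-coordinate representative over the closed set $A_n\cap U_{n+1}$ inside each chart, gluing on closed sets, and stabilization of the sequence by local finiteness --- and that part is correct and complete. So there is no competing route in the paper to compare against; you have supplied the proof the author delegated to the literature.

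The one place that needs tightening is the smooth upgrade. Whitney's extension theorem from an arbitrary closed set requires prescribed compatible jet data, and your sets $A_n$ are not submanifolds (they are unions of $A$ with the closures $\overline{V_1},\dots,\overline{V_n}$), so you cannot literally invoke ``a smooth map on a closed submanifold extends smoothly'' at the inductive step. The standard repair, which your bump-function remark already points toward, is to strengthen the inductive hypothesis: carry a smooth section defined on an \emph{open neighbourhood} $O_n\supset A_n$ (the base case $n=0$ is the genuine submanifold-extension step, e.g.\ via a tubular neighbourhood of $A$), and at stage $n+1$ choose a smooth $\chi$ with $\chi=1$ on a neighbourhood of $\overline{V_{n+1}}\setminus O_n$ and $\mathrm{supp}\,\chi$ contained in $U_{n+1}$ and disjoint from $A_n$, then set the new local representative equal to $(1-\chi)f$ in the fiber coordinates of the single chart $U_{n+1}$. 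Using the affine structure of $\mathbb R^m$ inside one chart is legitimate even though the transition functions are not affine, and local finiteness again makes the resulting gluing locally finite, hence smooth. With that adjustment your proof is complete.
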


\begin{corollary} \label{510a1}
A structure group $G$ of a principal bundle $P$ is reducible to a
closed subgroup $H$ if the quotient space $G/H$ is diffeomorphic
to an Euclidean space $\mathbb R^m$.
\end{corollary}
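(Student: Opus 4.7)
The plan is to combine the two results immediately preceding the corollary. By Theorem \ref{redsub}, reducibility of the structure group $G$ of the principal bundle $P\to X$ to the closed subgroup $H$ is equivalent to the existence of a global section of the associated quotient bundle
\[
\Si=P/H\ar^{\pi_{\Si X}} X,
\]
which, as noted in (\ref{b3193}), is a $P$-associated fiber bundle with typical fiber $G/H$. So the corollary reduces to exhibiting a global section of $\Si\to X$.

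Under the hypothesis, the typical fiber $G/H$ is diffeomorphic to $\mathbb R^m$. I would then invoke Theorem \ref{tmp1}, which guarantees that a fiber bundle over $X$ whose typical fiber is diffeomorphic to $\mathbb R^m$ always admits a global section. Applying this to $\Si\to X$ yields a global section $h:X\to\Si$, and by Theorem \ref{redsub} the corresponding restriction $P^h=\pi_{P\Si}^{-1}(h(X))$ is a principal reduced $H$-subbundle of $P$. This completes the reduction.

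The argument is essentially a direct chaining of two previously stated results, so there is no serious technical obstacle. The only point to be careful about is that Theorem \ref{tmp1} refers to the typical fiber of the total bundle $\Si\to X$, not to any auxiliary fiber such as that of $\pi_{GH}:G\to G/H$; but this is exactly the content of (\ref{b3193}), where $G/H$ is identified as the typical fiber of $\Si$. One might also remark that local triviality of $\Si\to X$ (needed to apply the extension-of-sections argument inside the proof of Theorem \ref{tmp1}) is automatic since $\Si$ is a smooth fiber bundle associated to the principal bundle $P$. With these observations in place, the proof is essentially immediate.
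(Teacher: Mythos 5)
Your proof is correct and is exactly the argument the paper intends: the corollary is stated immediately after Theorem \ref{tmp1} precisely so that one applies it to the $P$-associated bundle $\Si=P/H\to X$ with typical fiber $G/H\cong\mathbb R^m$ and then invokes Theorem \ref{redsub} to convert the resulting global section into a reduced $H$-subbundle. No divergence from the paper's route.
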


In particular, we always have a reduction of a Lie structure group
$G$ to its maximal compact subgroup $H$ (see Remark \ref{52e1}).
This is the case $G=GL(m,\mathbb C)$, $H=U(m)$ and $G=GL(n,\mathbb
R)$, $H=O(n)$, which are essential for applications.

Let us note that different principal $H$-subbundles $P^h$ and
$P^{h'}$ of a principal $G$-bundle $P$ need not be mutually
isomorphic.

\begin{theorem}\label{iso0}  If the quotient space $G/H$ is diffeomorphic to
an Euclidean space $\mathbb R^m$, then all principal reduced
$H$-subbundles of a principal $G$-bundle $P$ are mutually
isomorphic \cite{ste}.
\end{theorem}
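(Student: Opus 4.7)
The plan is to reduce the statement to the homotopy invariance of principal bundle pullbacks, with Theorem \ref{redsub} and Theorem \ref{tmp1} doing the heavy lifting. By Theorem \ref{redsub}, any two reduced $H$-subbundles have the form $P^{h_0}=h_0^*P_\Si$ and $P^{h_1}=h_1^*P_\Si$ for global sections $h_0,h_1$ of $\Si\to X$, so it suffices to exhibit an isomorphism of these pullbacks as principal $H$-bundles over $X$.

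First I would show that any two such sections are smoothly homotopic through sections of $\Si$. Let $\pi:X\times[0,1]\to X$ be the projection and consider the pulled-back bundle $\pi^*\Si\to X\times[0,1]$; its typical fiber is still $G/H$, which by hypothesis is diffeomorphic to $\mathbb R^m$. The prescription $h_0$ on $X\times\{0\}$ together with $h_1$ on $X\times\{1\}$ defines a section of $\pi^*\Si$ over the closed submanifold $X\times\{0,1\}\subset X\times[0,1]$, and Theorem \ref{tmp1} then lets me extend it to a global section $\wt h:X\times[0,1]\to\pi^*\Si$, which is the desired smooth homotopy $\wt h_t$ from $h_0$ to $h_1$.

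Then I would invoke the classical fact (covering homotopy theorem for principal bundles on a paracompact base) that pullbacks of a principal bundle along homotopic maps are isomorphic. Applied to the $H$-principal bundle $P_\Si=P\to \Si$ and to the homotopy $\wt h_t$, this gives $h_0^*P_\Si\cong h_1^*P_\Si$, i.e. $P^{h_0}\cong P^{h_1}$ as principal $H$-bundles over $X$, completing the argument.

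The substantive step is the construction of the homotopy, and the hypothesis $G/H\cong\mathbb R^m$ enters exactly once there: it guarantees that the typical fiber of the auxiliary bundle $\pi^*\Si$ is still Euclidean, so Theorem \ref{tmp1} is applicable in its full strength (extension of a section defined on a closed submanifold, not merely existence of some global section). I do not anticipate a genuine obstacle beyond verifying this fiber preservation under pullback, which is immediate. The whole proof therefore reads as a clean chaining of Theorem~\ref{redsub}, Theorem~\ref{tmp1}, and homotopy invariance of principal bundle pullbacks.
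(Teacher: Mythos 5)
Your argument is correct, and it is essentially the classical one from Steenrod, which is exactly what the paper does here: Theorem~\ref{iso0} is stated without proof and attributed to \cite{ste}, so there is no in-paper proof to diverge from. The chain you propose --- identify $P^{h_i}$ with $h_i^*P_\Si$ via Theorem~\ref{redsub}, build a homotopy of sections by extending the section of $\pi^*\Si$ prescribed on $X\times\{0,1\}$ using Theorem~\ref{tmp1} (the hypothesis $G/H\cong\mathbb R^m$ entering only to make the typical fiber of $\pi^*\Si$ Euclidean), and then apply homotopy invariance of pullbacks of principal bundles over a paracompact base --- is the standard route and closes the statement. The only wrinkle worth smoothing is that $X\times[0,1]$ is a manifold with boundary, whereas Theorem~\ref{tmp1} is stated for the class of manifolds fixed at the start of Section~2; this is cosmetic, since you can extend the section from the closed submanifold $X\times\{0\}\cup X\times\{1\}$ of $X\times\mathbb R$ instead and then restrict the resulting global section to $X\times[0,1]$ to get the homotopy.
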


\begin{theorem}\label{iso1}  Let a Lie structure group $G$ of a principal bundle $P$ be reduced to its closed
subgroup $H$. We have the following statements:

$\bullet$ Every vertical automorphism $\Phi$ of a principal bundle
$P$ maps its principal reduced $H$-subbundle $P^h$ to an
isomorphic principal reduced $H$-subbundle $P^{h'}=\Phi(P^h)$.

$\bullet$ Conversely, let two reduced subbundles $P^h$ and
$P^{h'}$ of a principal bundle $P\to X$ be mutually isomorphic,
and let $\Phi:P^h\to P^{h'}$ be their isomorphism over $X$. Then
it is extended to an automorphism of a principal bundle $P$.
\end{theorem}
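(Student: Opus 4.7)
The plan is to treat the two bullets separately, using Theorem \ref{redsub} as the bridge between principal reduced $H$-subbundles of $P$ and global sections of $\Sigma=P/H\to X$.

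For the first bullet, I start from a vertical automorphism $\Phi:P\to P$, which by definition is $G$-equivariant and projects to the identity on $X$. I would first verify that $\Phi(P^h)$ is again a principal reduced $H$-subbundle: it is $H$-stable because $\Phi(P^h)\cdot H=\Phi(P^h\cdot H)=\Phi(P^h)$ by $G$-equivariance, it projects onto all of $X$ because $\Phi$ is vertical, and in each fiber it is carried diffeomorphically onto an $H$-orbit. The restriction $\Phi|_{P^h}:P^h\to \Phi(P^h)$ is then an $H$-equivariant diffeomorphism over $X$, hence an isomorphism of principal $H$-bundles. Applying Theorem \ref{redsub} to the subbundle $\Phi(P^h)$ produces a unique global section $h'$ of $\Sigma\to X$ with $\Phi(P^h)=P^{h'}$, namely $h'(x)=\pi_{P\Sigma}\bigl(\Phi(P^h)_x\bigr)$.

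For the converse, given an $H$-equivariant diffeomorphism $\Phi:P^h\to P^{h'}$ over $X$, the plan is to define a vertical automorphism $\wt\Phi:P\to P$ by
\[
\wt\Phi(p_h\cdot g)=\Phi(p_h)\cdot g, \qquad p_h\in P^h,\ g\in G.
\]
Every $p\in P_x$ has such a decomposition because $(P^h)_x$ is a nonempty $H$-torsor inside the $G$-torsor $P_x$. Once well-definedness is established, $G$-equivariance is immediate from the formula and verticality follows from $\Phi$ being over $X$.

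The main obstacle is that the decomposition $p=p_h\cdot g$ is not unique: one may replace $p_h$ by $p_h\cdot h$ for any $h\in H$, compensating with $g\mapsto h^{-1}g$. Well-definedness therefore requires $\Phi(p_h\cdot h)\cdot h^{-1}g=\Phi(p_h)\cdot g$, which is exactly the $H$-equivariance of $\Phi$. To check smoothness, I would choose a local section $z_\alpha$ of $P^h$ on $U_\alpha$ from an atlas $\Psi_h$ and write $p=z_\alpha(x)\cdot g_\alpha(p)$ with $g_\alpha:\pi_P^{-1}(U_\alpha)\to G$ smooth; then $\wt\Phi(p)=\Phi(z_\alpha(x))\cdot g_\alpha(p)$ is a product of smooth maps, and different choices of $\alpha$ agree by the well-definedness argument. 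Finally, $\Phi^{-1}:P^{h'}\to P^h$ is extended by the same construction to a vertical automorphism of $P$ that is inverse to $\wt\Phi$, so $\wt\Phi$ is the required automorphism of the principal bundle $P$.
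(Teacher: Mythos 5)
Your proposal is correct and follows essentially the same route as the paper: for the converse, your formula $\wt\Phi(p_h\cdot g)=\Phi(p_h)\cdot g$ is exactly the paper's extension $\Phi_P(p)=pf(p)$ with $f(pg)=g^{-1}f(p)g$, and your well-definedness check via the $H$-equivariance of $\Phi$ supplies the detail the paper leaves implicit. The only cosmetic difference is in the first bullet, where the paper verifies that $\Phi(P^h)$ is a reduced $H$-subbundle by showing the transition functions of the atlas $\{\Phi\circ z^h_\al\}$ coincide with those of $\Psi^h$ (a fact it reuses in Remark \ref{tmp10}), whereas you check $H$-stability and fiberwise transitivity directly and then invoke Theorem \ref{redsub}.
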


\begin{proof}
Let
\beq
\Psi^h=\{(U_\al,z^h_\al), \vr^h_{\al\bt}\} \label{510ff}
\eeq
be an atlas of a principal reduced subbundle $P^h$, where
$z^h_\al$ are local sections of $P^h\to X$ and $\vr^h_{\al\bt}$
are transition functions. Given a vertical automorphism $\Phi$ of
a fiber bundle $P$, its subbundle $P^{h'}=\Phi(P^h)$ is provided
with an atlas
\beq
\Psi^{h'}=\{(U_\al,z^{h'}_\al), \vr^{h'}_{\al\bt}\},
\label{510ff1}
\eeq
determined by local sections $z^{h'}_\al =\Phi\circ z^h_\al$. Then
one can easily obtain
\beq
\vr^{h'}_{\al\bt}(x) =\vr^h_{\al\bt}(x), \qquad x\in U_\al\cap
U_\bt, \label{510ff2}
\eeq
i.e., transition functions of the atlas (\ref{510ff1}) take their
values in a subgroup $H$. Conversely, every automorphism $(\Phi,
\id X)$ of principal reduced subbundles $P^h$ and $P^{h'}$ of a
principal bundle $P$ determines an $H$-equivariant $G$-valued
function $f$ on $P^h$ by the relation $pf(p)=\Phi(p)$, $p\in P^h$.
Its extension to a $G$-equivariant function on $P$ is defined as
\be
f(pg)=g^{-1}f(p)g, \qquad p\in P^h, \qquad g\in G.
\ee
The relation $\Phi_P(p)=pf(p)$, $p\in P$, then defines a vertical
automorphism $\Phi_P$ of a principal bundle $P$ whose restriction
to $P^h$ coincides with $\Phi$.
\end{proof}

\begin{remark} \label{tmp10}
In Theorem \ref{iso1},  we can regard a principal $G$-bundle $P$
endowed with the atlas $\Psi^h$ (\ref{510ff}) as a
$P^h$-associated fiber bundle with a structure group $H$ acting on
its typical fiber $G$ on the left. Correspondingly, being equipped
with the atlas $\Psi^{h'}$ (\ref{510ff1}), a principal bundle $P$
is a $P^{h'}$-associated $H$-bundle. However, the $H$-bundles
$(P,\Psi^h)$ and $(P,\Psi^{h'})$ are not equivalent because their
atlases $\Psi^h$ and  $\Psi^{h'}$ fail to be equivalent. Indeed,
the union of these atlases is an atlas
\be
\Psi=\{(U_\al,z^h_\al, z^{h'}_\al), \vr^h_{\al\bt},
\vr^{h'}_{\al\bt}, \vr_{\al\al}=f(z_\al)\}
\ee
with transition functions
\be
 \vr_{\al\al}(x)=f(z_\al(x)), \qquad z^{h'}_\al(x)
=z^h_\al(x)\vr_{\al\al}(x)=(\Phi_P\circ z^h_\al)(x),
\ee
between the corresponding charts $(U_\al,z^h_\al)$ and
$(U_\al,z^{h'}_\al)$ of atlases $\Psi^h$ and $\Psi^{h'}$,
respectively. However transition functions $\vr_{\al\al}$ are not
$H$-valued. At the same time, the equality (\ref{510ff2}) implies
that transition functions of both atlases constitute the same
cocycle. Hence, the $H$-bundles $(P,\Psi^h)$ and $(P,\Psi^{h'})$
are associated. Owing to an isomorphism $\Phi:P^h\to P^{h'}$, we
can write
\be
P=(P^h\times G)/H=(P^{h'}\times G)/H,\qquad (p\times
g)/H=(\Phi(p)\times f^{-1}(p)g)/H.
\ee
For any $\rho\in H$, we then obtain
\be
&& (p\rho,g)/H=(\Phi(p)\rho, f^{-1}(p)g)/H=(\Phi(p),\rho
f^{-1}(p)g)/H \\
&& \qquad =(\Phi(p), f^{-1}(p)\rho'g)/H,
\ee
where
\beq
\rho'= f(p)\rho f^{-1}(p). \label{510f23}
\eeq
Hence, we can treat $(P,\Psi^{h'})$ as a $P^h$-associated bundle
with the same typical fiber $G$ as for $(P,\Psi^h)$, but the
action $g\to\rho' g$ (\ref{510f23}) of a structure group $H$ on a
typical fiber $G$ of the bundle $(P,\Psi^{h'})$ is not equivalent
to its action $g\to\rho g$ on a typical fiber $G$ of the bundle
$(P,\Psi^h)$, because they have different orbits in $G$.
\end{remark}

\section{Reduction of connections}

We present compatibility conditions for connections on a principal
bundle with its reduced structures \cite{book00,kob}.

\begin{theorem} \label{mos176}
Since connections on a principal bundle are equivariant, every
$H$-connection $A_h$ on an $H$-subbundle $P^h$ of a principal
$G$-bundle $P$ is extendible to a $G$-connection on $P$.
\end{theorem}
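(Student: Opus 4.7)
\noindent The plan is to work with horizontal distributions and propagate the one on $P^h$ to all of $P$ by the right $G$-action. Let $\mathcal H\subset TP^h$ be the horizontal distribution corresponding to the $H$-connection $A_h$, which by hypothesis is $H$-equivariant: $(R_\rho)_*\mathcal H_p=\mathcal H_{p\rho}$ for $p\in P^h$, $\rho\in H$. Because $P^h$ meets every fiber of $P\to X$ while the right $G$-action on these fibers is free and transitive, any $p\in P$ admits a (non-unique) decomposition $p=p_0\cdot g$ with $p_0\in P^h$, $g\in G$, and two decompositions of the same $p$ differ exactly by $(p_0,g)\mapsto(p_0\rho,\rho^{-1}g)$, $\rho\in H$. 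I then define
\[
\wt{\mathcal H}_p:=(R_g)_*\mathcal H_{p_0}.
\]

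\noindent The decisive step is to check that $\wt{\mathcal H}_p$ is independent of the chosen decomposition: replacing $(p_0,g)$ by $(p_0\rho,\rho^{-1}g)$ gives
$(R_{\rho^{-1}g})_*\mathcal H_{p_0\rho}=(R_g)_*(R_{\rho^{-1}})_*(R_\rho)_*\mathcal H_{p_0}=(R_g)_*\mathcal H_{p_0}$,
which is precisely the $H$-equivariance of $\mathcal H$. This is the one point at which the hypothesis ``connections on a principal bundle are equivariant'' gets used. Granted well-definedness, $G$-equivariance of $\wt{\mathcal H}$ is immediate: the reparameterisation $p\mapsto pg'$ sends $(p_0,g)$ to $(p_0,gg')$, so $\wt{\mathcal H}_{pg'}=(R_{gg'})_*\mathcal H_{p_0}=(R_{g'})_*\wt{\mathcal H}_p$.

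\noindent It then remains to verify that $\wt{\mathcal H}$ is a smooth horizontal distribution on $P$. Horizontality need only be checked at $p_0\in P^h$ by $G$-equivariance, and there $\wt{\mathcal H}_{p_0}=\mathcal H_{p_0}$ projects isomorphically onto $T_xX$ with $x=\pi_P(p_0)$, hence meets $V_{p_0}P=\cG\cdot p_0$ trivially and, by dimension count, spans a complement to it. Smoothness follows from choosing adapted local trivialisations: for local sections $z^h_\al:U_\al\to P^h$, the map $(x,g)\mapsto z^h_\al(x)\cdot g$ trivialises $\pi_P^{-1}(U_\al)$ and expresses $\wt{\mathcal H}_{(x,g)}$ as $(R_g)_*\mathcal H_{z^h_\al(x)}$, which is smooth in $(x,g)$. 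Under the identification $C=J^1P/G$ of (\ref{tmp30}), $\wt{\mathcal H}$ is the sought $G$-connection on $P$, and its restriction to $TP^h$ returns $\mathcal H$, i.e.\ $A_h$, by construction. The only genuine obstacle is the well-definedness identity above; everything else is a routine dimension and smoothness check in suitable trivialisations, and an analogous calculation yields the equivalent connection-form version in the language of (\ref{1131}).
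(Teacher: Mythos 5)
Your proof is correct, and it is essentially the classical argument (Kobayashi--Nomizu, Ch.~II, Prop.~6.1) that the paper itself relies on: the paper states Theorem \ref{mos176} without proof, referring to \cite{book00,kob}, and your extension of the horizontal distribution by the right $G$-action, with well-definedness secured exactly by the $H$-equivariance of $A_h$, is the standard proof found there. All the checks (uniqueness of the decomposition $p=p_0g$ up to $H$, $G$-equivariance, horizontality at points of $P^h$, smoothness in an adapted trivialisation) are in order.
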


\begin{theorem} \label{mos177}
Conversely, the connection  $A$ (\ref{1131}) on a principal
$G$-bundle $P$ is reducible to an $H$-connection on a principal
reduced $H$-subbundle $P^h$ of $P$ iff the corresponding global
section $h$ of the quotient bundle $P/H\to X$ associated to $P$ is
an integral section of the associated connection $A$ (\ref{tmp33})
on $P/H\to X$, i.e., $D^Ah=0$, where $D^A$ is the covariant
differential determined by this connection $A$.
\end{theorem}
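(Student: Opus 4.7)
The plan is to rephrase reducibility of $A$ as a pointwise condition on horizontal subspaces along $P^h\subset P$, and then push that condition down to the quotient bundle $\Si=P/H\to X$ via the projection $\pi_{P\Si}$. A $G$-connection $A$ defines a horizontal distribution $HP\subset TP$, and the standard criterion for compatibility of a connection with a reduction says that $A$ restricts to an $H$-connection on $P^h$ iff $H_pP\subset T_pP^h$ for every $p\in P^h$. Because $A$ is $G$-equivariant and $P^h$ is $H$-invariant, it suffices to verify this inclusion at one point of each fiber of $P^h\to X$.

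Next I would use Theorem \ref{redsub}, according to which $P^h=\pi_{P\Si}^{-1}(h(X))$. For $p\in P^h$ with $\pi_{P\Si}(p)=h(x)$, the tangent space $T_pP^h$ is therefore the preimage under $T\pi_{P\Si}$ of $h_*T_xX\subset T_{h(x)}\Si$. Substituting this description into the reducibility criterion converts it into
\[
(T\pi_{P\Si})(H_pP)\subset h_*T_xX \quad \text{for every } p\in P^h,\ x=\pi_{\Si X}(\pi_{P\Si}(p)).
\]
The decisive input is the construction of the induced connection (\ref{tmp33}) on the associated bundle $\Si\to X$: its horizontal subspace at $y\in\Si$ is, by definition, the $T\pi_{P\Si}$-image of $H_pP$ for any $p$ with $\pi_{P\Si}(p)=y$, independence of the choice of $p$ coming from $G$-equivariance of $A$, and injectivity on $H_pP$ from the fact that the vertical tangent space of $P\to\Si$ sits inside the vertical tangent space of $P\to X$ and hence meets $H_pP$ trivially. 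Applied at $y=h(x)$, this identifies the left-hand side of the displayed inclusion with the horizontal subspace of the induced connection at $h(x)$.

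Both this horizontal subspace and $h_*T_xX$ have dimension $n=\dim X$, so the inclusion above is automatically an equality for every $x$, which is exactly the statement that $h_*\xi$ is horizontal relative to the induced connection for every $\xi\in TX$. That equality is the vanishing of the vertical component of $dh$ with respect to the induced connection, i.e., $D^Ah=0$. The main obstacle, such as it is, consists in keeping track of the three horizontal distributions (on $P$, on $\Si$, and on $P^h$) and verifying that $T\pi_{P\Si}$ intertwines them as claimed; once that functoriality is confirmed from the construction in Section~2, the equivalence is a direct unpacking of definitions, and no global hypothesis on $X$ or $G/H$ intervenes.
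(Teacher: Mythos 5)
Your argument is correct. Note, though, that the paper itself offers no proof of this statement: Theorem \ref{mos177} is quoted as a known compatibility condition with references to \cite{book00,kob}, so there is no in-paper argument to compare against. What you have written is essentially the standard proof from those sources: reducibility is rephrased as the pointwise condition $H_pP\subset T_pP^h$ on horizontal subspaces along $P^h$, Theorem \ref{redsub} gives $T_pP^h=(T_p\pi_{P\Si})^{-1}(h_*T_xX)$, and pushing forward by $T\pi_{P\Si}$ identifies the condition with the statement that $h_*T_xX$ contains --- hence, by the dimension count, equals --- the horizontal subspace of the induced connection on $\Si\to X$ at $h(x)$, which is precisely $D^Ah=0$. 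The one step you rightly flag as needing verification, namely that the horizontal distribution of the associated connection (\ref{tmp33}) on $\Si=P/H$ is the $T\pi_{P\Si}$-image of that of $A$ on $P$, does follow from the construction of associated connections once one writes $P/H\cong(P\times G/H)/G$ with $\pi_{P\Si}(p)=(p,eH)/G$; with that confirmed, the proof is complete and fills a gap the paper leaves to the literature.
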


In particular, a connection on $P$ always is reducible to a
connection on $P^h$ under the following condition
\cite{book09,kob}.

\begin{theorem} \label{redt}
Let the Lie algebra  $\cG$ of a Lie group $G$ be a direct sum
\beq
\cG = {\mathfrak h} \oplus {\mathfrak f} \label{g13}
\eeq
of the Lie algebra ${\mathfrak h}$ of a Lie group $H$ and its
complement $\mathfrak f$ such that we have the condition
$[{\mathfrak h},{\mathfrak f}]\subset {\mathfrak f}$. Let $A$ be a
$G$-connection on a principal bundle $P$. We consider a principal
reduced bundle $P^h$ with an atlas $\Psi_h$, which also is an
atlas of a principal bundle $P$. Then the pull-back $\ol
A_h=h^*A_{\mathfrak h}$ on $P^h$ of the ${\mathfrak h}$-valued
constituent $A_{\mathfrak h}$ of the connection form $A$
(\ref{1131}) written with respect to an atlas $\Psi_h$ is an
$H$-connection on $P^h$.
\end{theorem}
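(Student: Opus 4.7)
The plan is to recast the theorem in the language of principal connection $1$-forms on $P$ and to verify the two defining axioms (verticality and equivariance) for the $\mathfrak h$-component after restriction to $P^h$. I would first regard the $G$-connection $A$ on $P$ as a global $\cG$-valued $1$-form $\omega$ on $P$ satisfying $\omega(\xi^*)=\xi$ for every fundamental vector field $\xi^*$ with $\xi\in\cG$, together with $R_g^*\omega=\mathrm{Ad}(g^{-1})\omega$ for every $g\in G$. Using the decomposition (\ref{g13}) I would split $\omega=\omega_{\mathfrak h}+\omega_{\mathfrak f}$, and identify the object $\ol A_h$ of the theorem with the pull-back $i_h^*\omega_{\mathfrak h}$ along the inclusion $i_h:P^h\to P$; the coordinate expression (\ref{1131}) written relative to the atlas $\Psi_h$ of $P^h$ (which is simultaneously an atlas of $P$) then reproduces exactly the local description in the statement.

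Next I would verify the two properties of an $H$-principal connection. The verticality axiom is immediate: for $\xi\in\mathfrak h$ the fundamental vector field $\xi^*$ on $P$ is tangent to the submanifold $P^h$ (since $H\subset G$ acts on $P^h$), and $\omega(\xi^*)=\xi\in\mathfrak h$ forces $\omega_{\mathfrak f}(\xi^*)=0$ and $\omega_{\mathfrak h}(\xi^*)=\xi$; pulling back by $i_h$ preserves this identity. The equivariance axiom $R_\rho^*(i_h^*\omega_{\mathfrak h})=\mathrm{Ad}(\rho^{-1})(i_h^*\omega_{\mathfrak h})$ for $\rho\in H$ is the heart of the argument and the only place where the hypothesis $[\mathfrak h,\mathfrak f]\subset\mathfrak f$ is used.

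For the equivariance, apply $R_\rho^*$ to the full connection form and use its $G$-equivariance to obtain $R_\rho^*\omega_{\mathfrak h}+R_\rho^*\omega_{\mathfrak f}=\mathrm{Ad}(\rho^{-1})\omega_{\mathfrak h}+\mathrm{Ad}(\rho^{-1})\omega_{\mathfrak f}$. The bracket condition $[\mathfrak h,\mathfrak f]\subset\mathfrak f$ together with the automatic $[\mathfrak h,\mathfrak h]\subset\mathfrak h$ gives $\mathrm{ad}(\mathfrak h)\mathfrak h\subset\mathfrak h$ and $\mathrm{ad}(\mathfrak h)\mathfrak f\subset\mathfrak f$, and hence (with $H$ connected, which is the standing assumption behind the reductive terminology) $\mathrm{Ad}(H)\mathfrak h\subset\mathfrak h$ and $\mathrm{Ad}(H)\mathfrak f\subset\mathfrak f$. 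Projecting the above identity onto the $\mathfrak h$-summand along this $\mathrm{Ad}(H)$-invariant decomposition then yields $R_\rho^*\omega_{\mathfrak h}=\mathrm{Ad}(\rho^{-1})\omega_{\mathfrak h}$, and restricting to $P^h$ gives the desired equivariance.

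The main obstacle is a bookkeeping one rather than a conceptual one: reconciling the atlas-based formulation (a family of local $\cG$-valued $1$-forms on $X$, split by (\ref{g13}) and glued by the $H$-valued transition functions of $\Psi_h$) with the intrinsic $\cG$-valued form on $P$, and checking that the $\mathfrak h$-components of the local forms assemble into a single $H$-connection on $P^h$. This gluing succeeds precisely because of the $\mathrm{Ad}(H)$-invariance of both summands established above; without the reductive hypothesis the $\mathfrak f$-part of $\omega$ would contribute to the $\mathfrak h$-part after an $H$-valued gauge transformation, and $\ol A_h$ would fail to be well defined as a global $1$-form on $P^h$.
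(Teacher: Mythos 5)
Your proposal is correct. Note that the paper itself offers no proof of this theorem---it is quoted from the literature (\cite{book09,kob}), and your argument is precisely the standard Kobayashi--Nomizu one: split the connection form $\omega=\omega_{\mathfrak h}+\omega_{\mathfrak f}$, observe that verticality of $\omega_{\mathfrak h}$ on $P^h$ is automatic, and derive $H$-equivariance of $\omega_{\mathfrak h}$ by projecting the $G$-equivariance identity onto the $\mathfrak h$-summand, which is legitimate exactly because the decomposition (\ref{g13}) is ${\mathrm{Ad}}(H)$-invariant. You are also right to flag the one genuine subtlety: the stated hypothesis $[{\mathfrak h},{\mathfrak f}]\subset{\mathfrak f}$ only yields ${\mathrm{Ad}}(H){\mathfrak f}\subset{\mathfrak f}$ on the identity component of $H$, so either $H$ must be connected or the group-level invariance must be assumed outright (as Kobayashi--Nomizu in fact do); for the paper's main example $H=SO(1,3)$ inside $GL^+(4,\mathbb R)$, where $H$ is not connected, the group-level invariance holds anyway, so nothing is lost.
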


In this case, matter fields with an exact symmetry group $H$ can
be written in the presence of gauge fields with a larger group $G$
of spontaneously broken symmetries (Section 6).

In particular, the decomposition (\ref{g13}) occurs if $H$ is the
Cartan subgroup of $G$, and we therefore call this the Cartan
decomposition.

For instance, in gauge gravitation theory on the principal frame
bundle $LX$, the Lie algebras $\cG=gl(4,\mathbb R)$ and
${\mathfrak h}=so(1,3)$ satisfy the condition (\ref{g13}), and for
a given pseudo-Riemannian metric $h$, a general linear connection
can be decomposed into the sum of Christoffel symbols, the
contorsion tensor, and the nonmetricity tensor. Its first two
terms constitute the Lorentzian connection on a reduced
$SO(1.3)$-subbundle $L^hX\subset LX$, which allows describing
Dirac spinor fields in gravitation theory in the presence of a
general linear connection \cite{gg,tmf,sard11}.

We can also generate connections on principal reduced subbundles
in a different way.

Let $P\to X$ be a principal bundle. For a morphism of manifolds
$\phi:X'\to X$,  the pull-back bundle $\phi^*P\to X'$ is a
principal bundle with the same structure group as of $P$. If $A$
is a connection on a principal bundle $P$, then the pull-back
connection $\phi^*A$ on $\phi^*P$ is a connection on it as on a
principal bundle \cite{book00}. We hence obtain a result important
in what follows \cite{book09,book00}.

\begin{theorem} \label{mos178}
We consider the composite bundle (\ref{b3223a}). Let $A_\Si$ be a
connection on the principal $H$-bundle $P\to\Si$ (\ref{b3194}).
Then for any principal reduced $H$-bundle $i_h:P^h\to P$, the
pull-back connection $i_h^*A_\Si$ on $P^h$ is an $H$-connection on
this bundle
\end{theorem}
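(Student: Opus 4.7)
My plan is to reduce Theorem~\ref{mos178} to the pull-back-of-a-connection fact quoted just above its statement, applied to the section $h:X\to\Si$. The key observation, already recorded after Theorem~\ref{redsub}, is that the reduced subbundle $P^h=\pi_{P\Si}^{-1}(h(X))$ coincides with the restriction (i.e., pull-back) $h^*P_\Si$ of the principal $H$-bundle $P_\Si=P\to\Si$ along the section $h$. Under this identification the embedding $i_h:P^h\to P$ is precisely the canonical bundle morphism $h^*P_\Si\to P_\Si$ covering $h$, regarded as a morphism of principal $H$-bundles rather than as an inclusion of a subset of the $G$-bundle $P\to X$.

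First, I would make this identification precise: a point of $P^h\simeq h^*P_\Si$ is a pair $(x,p)\in X\times P$ with $\pi_{P\Si}(p)=h(x)$, and $i_h(x,p)=p$. The right $H$-action on $P^h$ is the restriction of the right $H$-action on $P$ inherited from its structure as an $H$-bundle over $\Si$, so $i_h$ commutes with $H$ and sits over $h$. Thus $i_h:P^h\to P$ is an $H$-equivariant morphism of principal $H$-bundles with base map $h:X\to\Si$.

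Second, I would invoke the general pull-back-of-connection statement cited immediately before the theorem: if $A_\Si$ is a connection on a principal $H$-bundle $P_\Si\to\Si$ and $\phi:X\to\Si$ is smooth, then $\phi^*A_\Si$ is a connection on $\phi^*P_\Si$. Taking $\phi=h$ produces a connection $h^*A_\Si$ on $h^*P_\Si\simeq P^h$, and the identification above shows that this coincides with $i_h^*A_\Si$. Finally, $H$-equivariance of $h^*A_\Si$ is inherited from that of $A_\Si$ because the pull-back intertwines the fiberwise $H$-actions, so the resulting connection is genuinely an $H$-connection on $P^h\to X$. I anticipate no substantive obstacle here: the only care needed is to view $i_h$ as a morphism of principal $H$-bundles with nontrivial base map $h$, after which the conclusion is immediate from the pull-back formalism and from Theorem~\ref{redsub}.
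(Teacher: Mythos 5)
Your proposal is correct and follows exactly the route the paper itself takes: the paper derives Theorem~\ref{mos178} as an immediate consequence of the pull-back-of-connection fact stated just before it, combined with the identification $P^h=h^*P_\Si$ recorded after Theorem~\ref{redsub}. You merely spell out the details (the $H$-equivariance of $i_h$ and the identification of $i_h^*A_\Si$ with $h^*A_\Si$) that the paper leaves implicit.
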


Let us note that, as already mentioned in Introduction, a
Lagrangian of matter fields factorizes through the vertical
covariant differential determined just by a connection on a fiber
bundle $P\to P/H$ (Section 6).

\section{Associated bundles and matter fields}

By virtue of Theorem \ref{redsub}, there is one-to-one
correspondence between the principal reduced $H$-subbundles $P^h$
of a principal bundle $P$ and the Higgs fields $h$. Given such a
subbundle $P^h$, let us consider an associated bundle
\beq
Y^h=(P^h\times V)/H \label{510f24}
\eeq
with a typical fiber V admitting the left action of an exact
symmetry group H. Its sections $s_h$ describe matter fields in the
presence of a Higgs field $h$ and an $H$-connection $A_h$ on a
principal bundle $P^h$.

Different fiber bundles $Y^h$ and $Y^{h'\neq h}$ (\ref{510f24})
are mutually related as follows. If the principal reduced
$H$-subbundles $P^h$ and $P^{h'}$ of a principal $G$-bundle $P$
are isomorphic by virtue of Theorem \ref{iso1}, then the
$P^h$-associated bundle $Y^h$ (\ref{510f24}) also is associated as
\beq
Y^h=(\Phi(p)\times V)/H \label{510f25}
\eeq
to a subbundle $P^{h'}$. If its typical fiber $V$ admits the
action of a whole group $G$, then the $P^h$-associated  bundle
$Y^h$ (\ref{510f24}) also is $P$-associated,
\be
Y^h=(P^h\times V)/H= (P\times V)/G.
\ee
Such $P$-associated  bundles are isomorphic as $G$-bundles, but
not equivalent as $H$-bundles, because transition functions
between their atlases are not $H$-valued (see Remark \ref{tmp10}).

For example, in gauge gravitation theory on a manifold $X$, the
tangent bundle $T$X treated for a given pseudo-Riemannian metric
$h$ as a $L^hX$-associated bundle is a fibering into copies of the
Minkowski space $M^hX$. However for different pseudo-Riemannian
metrics $h$ and $h'$, the fiber bundles $M^hX$ and $M^{h'}X$ are
not equivalent; in particular, representations of their elements
in terms of $\g$-matrices are not equivalent \cite{sard11,sard98}.

Because different fiber bundles $Y^h$ and $Y^{h'\neq h}$ are not
equivalent and need not be isomorphic, one must consider a
$V$-valued matter field only in pair with a certain Higgs field.
We therefore encounter a problem of description  of a set of all
pairs $(s_h,h)$ of matter and Higgs fields.

To describe matter fields in the presence of different Higgs
fields, we consider the composite bundle (\ref{b3223a}) and a
composite bundle
\beq
Y\ar^{\pi_{Y\Si}} \Si\ar^{\pi_{\Si X}} X, \label{b3225}
\eeq
where $Y\to \Si$ is a $P_\Si$-associated bundle
\beq
Y=(P\times V)/H \label{bnn}
\eeq
with a structure group $H$. For a given global section $h$ of the
fiber bundle $\Si\to X$ (\ref{b3193}) and for the corresponding
principal reduced $H$-bundle $P^h=h^*P$, the fiber bundle
(\ref{510f24}) associated to $P^h$ is the restriction
\beq
Y^h=h^*Y=(h^*P\times V)/H \label{b3226}
\eeq
of a fiber bundle $Y\to\Si$ to $h(X)\subset \Si$.

One can then prove the following statements
\cite{book09,sard06a,tmf14}.

\begin{proposition} \label{tmp11}
Every global section $s_h$ of the fiber bundle $Y^h$ (\ref{b3226})
is a global section of the composite fiber bundle (\ref{b3225})
projected onto a section $h=\pi_{Y\Si}\circ s$ of a fiber bundle
$\Si\to X$. Conversely, any global section $s$ of the composite
fiber bundle  $Y\to X$ (\ref{b3225}), when projected onto a
section $h=\pi_{Y\Si}\circ s$ of a fiber bundle $\Si\to X$, takes
its values in the subbundle $Y^hY$ (\ref{b3226}). Hence, there is
one-to-one correspondence between the sections of the fiber bundle
$Y^h$ (\ref{510f24}) and those of the composite bundle
(\ref{b3225}) that cover h.
\end{proposition}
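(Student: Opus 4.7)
This proposition is a structural statement about the composite bundle $Y\to\Si\to X$: it says that sections of $Y\to X$ covering a given Higgs field $h$ are the same data as sections of the pullback subbundle $Y^h=h^*Y$. The strategy is to identify $Y^h$ explicitly as the submanifold $\pi_{Y\Si}^{-1}(h(X))\subset Y$ and then move sections across this identification in both directions. The argument is essentially formal and no deep input is required beyond the definition of pullback.

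First, I would establish the identification $Y^h\simeq\pi_{Y\Si}^{-1}(h(X))$. Since $h:X\to\Si$ is a global section of the surjective submersion $\pi_{\Si X}$, it is a smooth embedding of $X$ onto the closed submanifold $h(X)\subset\Si$. The pullback $h^*Y$, defined as the fiber product $X\times_\Si Y$, sits canonically inside $Y$ via the second projection, and this projection is a diffeomorphism onto $\pi_{Y\Si}^{-1}(h(X))$. In particular, the fiber of $Y^h\to X$ over $x$ is precisely the fiber $Y_{h(x)}$ of $Y\to\Si$, and this identification is compatible with the $H$-action coming from the definition $Y^h=(h^*P\times V)/H=(P^h\times V)/H$.

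Second, for the forward direction, take a section $s_h:X\to Y^h$. Composing with the inclusion $Y^h\hookrightarrow Y$ produces a map $s:X\to Y$. By construction $s(x)\in Y_{h(x)}$ for all $x$, so $\pi_{Y\Si}\circ s=h$, and since $\pi_{\Si X}\circ h=\id_X$, $s$ is a global section of $Y\to X$ covering $h$. For the converse, let $s$ be any global section of the composite bundle $Y\to X$ and set $h:=\pi_{Y\Si}\circ s$, which is automatically a global section of $\Si\to X$. Then for every $x\in X$ one has $s(x)\in Y_{h(x)}=(Y^h)_x$, so $s$ factors through $Y^h\hookrightarrow Y$ and yields a section $s_h:X\to Y^h$. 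The two assignments are mutually inverse, establishing the bijection.

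The ``main obstacle'' here is really only bookkeeping: one must be careful that the pullback $h^*Y$, initially a fiber product, is genuinely realized as a subbundle of $Y\to X$ rather than merely as an abstract object. Once the section $h$ is recognized as an embedding of $X$ into $\Si$, so that $\pi_{Y\Si}^{-1}(h(X))$ inherits the structure of a locally trivial fiber bundle over $X$ with the same typical fiber $V$, both directions of the proposition collapse to unwinding definitions, and the resulting bijection is manifestly natural.
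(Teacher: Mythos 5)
Your proof is correct: the identification $Y^h=h^*Y\simeq\pi_{Y\Si}^{-1}(h(X))$ via the fiber product, followed by composing with (resp.\ factoring through) the inclusion $Y^h\hookrightarrow Y$, is exactly the standard argument, and the two assignments are indeed mutually inverse. The paper itself states this proposition without proof, deferring to the cited references, and your write-up supplies precisely the expected unwinding of definitions, so there is nothing to fault.
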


\begin{proposition} \label{tmp35}
An atlas
\beq
\Psi_{P\Si}=\{(U_{\Si \iota},z_\iota), \vr_{\iota\kappa}\}
\label{aaq0}
\eeq
of a principal $H$-bundle $P\to \Si$ and correspondingly of an
associated bundle $Y\to\Si$ defines an atlas
\beq
\Psi^h=\{(\pi_{P\Si}(U_{\Si \iota}),z_\iota\circ h),
\vr_{\iota\kappa}\circ h\} \label{aaq}
\eeq
of a reduced $H$-subbundle $P^h$ and hence of an associated bundle
$Y^h$, which also is an atlas of a principal bundle $P$ with
$H$-valued transition functions.
\end{proposition}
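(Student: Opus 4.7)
The plan is to exploit that by Theorem~\ref{redsub} the reduced subbundle $P^h = h^*P_\Si$ is literally the pull-back of $P\to\Si$ along the section $h$, so any local trivialization data on $P\to\Si$ pulls back tautologically to trivialization data on $P^h\to X$. Setting $V_\iota := h^{-1}(U_{\Si\iota}) \subset X$ (which is what the symbol $\pi_{P\Si}(U_{\Si\iota})$ must denote here, given that $h$ is a section of $\pi_{\Si X}$), the collection $\{V_\iota\}$ is an open cover of $X$ because $\{U_{\Si\iota}\}$ covers $\Si \supset h(X)$ and $h$ is continuous.

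Next I would check that each map $z_\iota\circ h : V_\iota \to P$ is in fact a local section of $P^h\to X$. Its image lies in $\pi_{P\Si}^{-1}(h(X)) = P^h$ because $\pi_{P\Si}\circ z_\iota = \id_{U_{\Si\iota}}$, and the section property over $X$ follows from $\pi_{PX}\circ z_\iota\circ h = \pi_{\Si X}\circ\pi_{P\Si}\circ z_\iota\circ h = \pi_{\Si X}\circ h = \id_X$. Evaluating the cocycle identity of $\Psi_{P\Si}$ at the point $h(x)$ yields $z_\kappa\circ h = (z_\iota\circ h)(\vr_{\iota\kappa}\circ h)$ on $V_\iota\cap V_\kappa$, and since $\vr_{\iota\kappa}$ is $H$-valued by definition, so is its composition with $h$. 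This establishes $\Psi^h$ as an $H$-valued atlas of the principal bundle $P^h\to X$; the corresponding atlas of the associated bundle $Y^h = (P^h\times V)/H$ is then produced by the standard construction recalled in~(\ref{tmp32}).

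To conclude, I would observe that, via the inclusion $i_h : P^h\hookrightarrow P$, the maps $z_\iota\circ h$ are simultaneously local sections of the principal $G$-bundle $P\to X$, and the transition functions computed between them inside $P$ coincide with those computed inside $P^h$. Hence $\Psi^h$ is at the same time an atlas of $P\to X$ whose cocycle takes values in $H\subset G$, which is precisely the characterization of a reduction of the structure group recalled in Section~3. The whole argument is essentially a bookkeeping exercise on pull-backs, and the only subtle point is keeping track of which bundle ($P^h$, $P$, or $Y^h$) a given local section is being viewed as belonging to at each stage; there is no genuine analytic or algebraic obstacle.
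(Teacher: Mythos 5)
Your proof is correct. The paper itself gives no proof of Proposition \ref{tmp35} --- it is stated as one of several facts quoted from \cite{book09,sard06a,tmf14} --- so there is no argument to compare against, but your pull-back/restriction computation is exactly the standard one those references supply: the cover $h^{-1}(U_{\Si\iota})$ of $X$, the verification that $z_\iota\circ h$ lands in $P^h=\pi_{P\Si}^{-1}(h(X))$ and projects to the identity on $X$, the evaluation of the cocycle at $h(x)$ to get $H$-valued transition functions $\vr_{\iota\kappa}\circ h$, and the observation that these same local sections serve as an atlas of $P\to X$ via $i_h$. You are also right that $\pi_{P\Si}(U_{\Si\iota})$ in (\ref{aaq}) is a slip of notation (that projection goes the wrong way) and that $h^{-1}(U_{\Si\iota})$ is the domain on which $z_\iota\circ h$ is actually defined; your reading is the one that makes the statement true.
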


Given an atlas $\Psi_P$ of a principal bundle $P$, which
determines the atlas of the associated bundle $\Si\to X$
(\ref{b3193}), and an atlas $\Psi_{Y\Si}$ of a fiber bundle $Y\to
\Si$,  we can endow the composite fiber bundle $Y\to X$
(\ref{b3225}) with the corresponding coordinate system
$(x^\la,\si^m,y^i)$, where $(\si^m)$ are fiberwise coordinates on
$\Si\to X$ and $(y^i)$ are those on $Y\to\Si$.

\begin{proposition} \label{LL4}
Let
\beq
A_\Si=dx^\la\ot(\dr_\la + \cA^a_\la e_a) + d\si^m\ot(\dr_m +
\cA^a_m e_a) \label{510f04}
\eeq
be a principal connection on a principal $H$-bundle $P\to \Si$,
and let
\beq
A_{Y\Si}=dx^\la\ot(\dr_\la + \cA^a_\la(x^\m,\si^k) I_a^i\dr_i) +
d\si^m\ot(\dr_m + \cA^a_m (x^\m,\si^k)I^i_a\dr_i) \label{510f05}
\eeq
be an associated connection on $Y\to\Si$, where $\{I_a\}$ is a
representation of the right Lie algebra $\gH_r$ of a group $H$ in
$V$. Then, for any $H$-subbundle $Y^h\to X$ of a composite bundle
$Y\to X$, the pull-back connection
\beq
A_h=h^*A_{Y\Si}= dx^\la\ot[\dr_\la + (\cA^a_m (x^\m,h^k)\dr_\la
h^m +\cA^a_\la(x^\m,h^k)) I_a^i\dr_i], \label{510f06}
\eeq
on $Y^h$ is associated to the pull-back connection $h^*A_\Si$ on a
principal reduced $H$-subbundle $P^h$ in Theorem \ref{mos178}.
\end{proposition}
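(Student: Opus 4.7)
The plan is to reduce the proposition to a short local-coordinate verification built on the naturality of two standard constructions: pull-back of a connection along a base map, and association of a connection on a fiber bundle to a principal connection. Once the atlases on $P^h$ and $Y^h$ are chosen compatibly, the claim becomes a direct comparison of connection forms.

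First I would fix an atlas of type (\ref{aaq0}) on $P\to\Si$ and the corresponding atlas on $Y\to\Si$, from which Proposition \ref{tmp35} produces the compatible atlas (\ref{aaq}) on $P^h$ and $Y^h$. Next I compute $h^*A_{Y\Si}$ directly: substituting $\si^m=h^m(x^\m)$ into (\ref{510f05}) and using $d\si^m=\dr_\la h^m\,dx^\la$, the horizontal lift in $Y$ of $Th(\dr_\la)=\dr_\la+\dr_\la h^m\,\dr_m$ is
$\dr_\la+\dr_\la h^m\,\dr_m+[\cA^a_\la(x,h)+\cA^a_m(x,h)\,\dr_\la h^m]I^i_a\dr_i$;
projecting this to $T(h^*Y)=TY^h$ removes the $\dr_m$ component and yields exactly the form (\ref{510f06}). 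The same substitution applied to (\ref{510f04}) gives the $\gH$-valued pull-back $h^*A_\Si$ on $P^h$, which by Theorem \ref{mos178} is a genuine $H$-connection, with components $[\cA^a_\la(x,h)+\cA^a_m(x,h)\,\dr_\la h^m]\,e_a$ in place of the $I^i_a\dr_i$-valued ones above.

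The final step is to invoke the definition of the associated connection (\ref{tmp33}): on a $P^h$-associated bundle, the connection induced by a principal connection of the form $A^a_\la\,e_a$ is obtained by replacing each generator $e_a$ of the right Lie algebra $\gH_r$ with its representation $I^i_a\dr_i$ on a typical fiber $V$. Applied to $h^*A_\Si$, this substitution produces precisely (\ref{510f06}). Since $Y^h=(P^h\times V)/H$ is $P^h$-associated by (\ref{b3226}), this identifies $A_h$ as the $H$-connection on $Y^h$ associated to $h^*A_\Si$, as claimed.

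The principal obstacle is notational rather than analytic: one must ensure that the trivialization of $Y^h$ used to read off (\ref{510f06}) is exactly the restriction, via Proposition \ref{tmp35}, of the trivialization of $Y\to\Si$ used to write (\ref{510f05}), so that the generators $e_a$ and the representation matrices $I^i_a$ refer to a single local chart throughout the argument. Once this coherence is in place, the commutativity of pull-back with the association construction is immediate, because both operations are defined locally and are intertwined by the common $H$-valued transition functions.
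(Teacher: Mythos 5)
Your proof is correct. Note that the paper itself gives no argument for Proposition \ref{LL4} (it is stated with a citation to earlier work), so there is nothing to diverge from; your local-coordinate verification that pull-back along $h$ commutes with the association construction --- computing $h^*A_{Y\Si}$ by substituting $d\si^m=\dr_\la h^m\,dx^\la$, computing $h^*A_\Si$ the same way, and matching them via the substitution $e_a\mapsto I^i_a\dr_i$ in the common atlas of Proposition \ref{tmp35} --- is exactly the standard argument the cited references supply. The one point worth tightening is the phrase about ``projecting away the $\dr_m$ component'': under the identification $Y^h\cong Y|_{h(X)}$ the horizontal lift of $Th(\dr_\la)$ is already tangent to $Y^h$, and the $\dr_\la h^m\,\dr_m$ term is simply absorbed into the coordinate vector $\dr_\la$ of the induced chart $(x^\la,y^i)$ rather than discarded.
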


Every connection $A_\Si$ (\ref{510f04}) on a fiber bundle
$Y\to\Si$ determines the first-order differential operator
\beq
\wt D: J^1Y\to T^*X\op\otimes_Y V_\Si Y, \qquad \wt D=
dx^\la\otimes(y^i_\la- \cA^i_\la -\cA^i_m\si^m_\la)\dr_i,
\label{7.10}
\eeq
acting on a composite bundle $Y\to X$, where $V_\Si Y$ is the
vertical tangent bundle to a fiber bundle $Y\to\Si$. It is called
the vertical covariant differential, and has the following
important property.

\begin{proposition} \label{tmp40}
For any section $h$ of a fiber bundle $\Si\to X$, the restriction
of the vertical differential $\wt D$ (\ref{7.10}) on the fiber
bundle $Y^h$ (\ref{b3226}) coincides with the covariant
differential $D^{A_h}$ on $Y^h$ with respect to the pull-back
connection $A_h$ (\ref{510f06}).
\end{proposition}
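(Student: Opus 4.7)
The plan is to verify the identity in adapted coordinates $(x^\la,\si^m,y^i)$ on $Y$, exploiting Proposition \ref{tmp11} to identify sections of $Y^h\to X$ with sections $s$ of the composite bundle $Y\to X$ that project onto the given Higgs field $h$. For such an $s$ one has $\si^m\circ s=h^m(x)$ and $y^i\circ s=s^i(x)$, so the jet prolongation $j^1s$ is pinned down in coordinates by $\si^m_\la\circ j^1s=\dr_\la h^m$ and $y^i_\la\circ j^1s=\dr_\la s^i$. In this sense the restriction of $\wt D$ to $Y^h$ is, on jets of sections covering $h$, the first-order operator obtained from (\ref{7.10}) by substituting $\si^m=h^m(x)$ and $\si^m_\la=\dr_\la h^m(x)$.

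First I would write out this substitution explicitly. Since the fiber coefficients in (\ref{7.10}) are $\cA^i_\la(x^\m,\si^k)$ and $\cA^i_m(x^\m,\si^k)$ pulled from the associated connection (\ref{510f05}), evaluating at $\si^k=h^k(x)$ yields
\beq
\wt D|_{Y^h}=dx^\la\ot\bigl(y^i_\la-\cA^i_\la(x^\m,h^k)-\cA^i_m(x^\m,h^k)\,\dr_\la h^m\bigr)\dr_i. \nonumber
\eeq
Next, I would read off the covariant differential $D^{A_h}$ directly from the explicit pull-back connection (\ref{510f06}). Using the relations $\cA^i_\la=\cA^a_\la I^i_a$ and $\cA^i_m=\cA^a_m I^i_a$ that identify (\ref{510f05}) as the connection on $Y\to\Si$ associated with (\ref{510f04}), the connection coefficients of $A_h$ are precisely $\cA^a_m(x^\m,h^k)\,\dr_\la h^m\,I^i_a+\cA^a_\la(x^\m,h^k)\,I^i_a$, so
\beq
D^{A_h}=dx^\la\ot\bigl(y^i_\la-\cA^a_\la(x^\m,h^k)I^i_a-\cA^a_m(x^\m,h^k)\,\dr_\la h^m\,I^i_a\bigr)\dr_i. \nonumber
\eeq
A termwise comparison with the formula for $\wt D|_{Y^h}$ shows the two operators coincide.

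Two small points require care. The first is checking that $\wt D$ indeed restricts to an operator on $J^1Y^h$ and takes values in $T^*X\ot V_\Si Y|_{Y^h}=T^*X\ot VY^h$; this follows because the vertical tangent bundle $V_\Si Y$ along the submanifold $Y^h\subset Y$ coincides with $VY^h$, since the fibers of $Y^h\to X$ are by construction the fibers of $Y\to\Si$ sitting over $h(X)$. The second is coordinate independence of the identification, which is immediate from Proposition \ref{LL4}: the atlas $\Psi^h$ (\ref{aaq}) obtained by pulling back $\Psi_{P\Si}$ is the one in which (\ref{510f06}) was written, and the substitution rule $\si^m_\la\mapsto\dr_\la h^m$ is covariant under changes of such adapted charts.

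I do not expect a genuine obstacle here; the statement is essentially a chain-rule bookkeeping identity between the vertical covariant differential on a composite bundle and the covariant differential of its pull-back to a level set of the projection onto $\Si$. The only temptation to resist is trying to prove it invariantly on $J^1Y$ without fixing coordinates: the cleanest route is the direct coordinate comparison above, justified globally by the compatibility already recorded in Proposition \ref{LL4} and Theorem \ref{mos178}.
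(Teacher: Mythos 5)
Your proof is correct. Note that the paper itself offers no proof of Proposition \ref{tmp40} — it is one of the statements quoted from the references \cite{book09,sard06a,tmf14} — so there is no in-paper argument to compare against; your direct coordinate computation (embedding $J^1Y^h$ into $J^1Y$ via $\si^m=h^m$, $\si^m_\la=\dr_\la h^m$, then matching coefficients against the pull-back connection (\ref{510f06})) is exactly the standard argument given in those references, and your two side remarks — that $V_\Si Y|_{Y^h}=VY^h$ and that the identification is compatible with the adapted atlases of Proposition \ref{LL4} — are the right points to check.
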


We thus find that those are sections of the composite bundle $Y\to
X$ (\ref{b3225}) that describe the pairs $(s_h, h)$ of matter and
Higgs fields in classical gauge theory with spontaneous symmetry
breaking.

The following fact is essential when constructing gauge theory
with spontaneous symmetry breaking \cite{tmf14,higgs13}.

\begin{theorem} \label{LL1}  The composite bundle $Y\to X$ (\ref{b3225})
is a $P$-associated bundle whose structure group is $G$ and whose
typical fiber is an $H$-bundle
\beq
W=(G\times V)/H, \label{wes}
\eeq
associated to a principal $H$-bundle $G\to G/H$ (\ref{ggh}).
\end{theorem}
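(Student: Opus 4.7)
The plan is to exhibit an explicit fibrewise isomorphism over $X$
\be
\Phi:(P\times W)/G\longrightarrow Y=(P\times V)/H,\qquad [p,[g,v]_H]_G\longmapsto [pg,v]_H,
\ee
where the left $G$-action on $W=(G\times V)/H$ is induced from left translation on the first factor, $g'\cdot[g,v]_H=[g'g,v]_H$. I would first observe that this $G$-action is well defined because the $H$-equivalence $(g,v)\sim(gh,h^{-1}v)$ involves right translation on $G$, which commutes with left translation; at the same time this realises $W$ as the fibre bundle with typical fibre $V$ associated to the principal $H$-bundle $G\to G/H$ of~(\ref{ggh}), yielding the typical fibre claimed in the theorem.

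The core of the argument is to check that $\Phi$ descends to the double quotient and is bijective. Well-definedness under the inner $H$-equivalence on $W$ follows from $[(pg)h,h^{-1}v]_H=[pg,v]_H$ (using the right $H$-action on $P$), and under the outer $G$-equivalence from $[(pg')(g'^{-1}g),v]_H=[pg,v]_H$. For the inverse I would define $\Psi:Y\to (P\times W)/G$ by $[p,v]_H\mapsto [p,[e,v]_H]_G$, and for $h\in H\subset G$ I would verify its invariance from
\be
[ph,[e,h^{-1}v]_H]_G=[p,h\cdot[e,h^{-1}v]_H]_G=[p,[h,h^{-1}v]_H]_G=[p,[e,v]_H]_G,
\ee
the last step being the $H$-equivalence $(h,h^{-1}v)\sim(e,v)$ inside $W$. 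The identities $\Phi\circ\Psi=\id$ and $\Psi\circ\Phi=\id$ then reduce to one-line computations, the second using the outer $G$-action by $g^{-1}$ to reabsorb $g$ into the second slot.

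Finally I would check that $\Phi$ covers $\id_X$ and is smooth and locally trivial. Both $[p,[g,v]_H]_G$ and its image $[pg,v]_H$ project to $\pi_P(pg)=\pi_P(p)\in X$, so $\Phi$ lies over $\id_X$. For local triviality I would use an atlas $\Psi_P=\{(U_\al,z_\al)\}$ of~(\ref{51f2}): over each $U_\al$ the sections $z_\al$ trivialise $(P\times W)/G$ directly, and by Proposition~\ref{tmp35} they also trivialise the composite bundle $Y$, with $\Phi$ corresponding to the identity of $U_\al\times W$ in both charts. The transition functions on overlaps are then precisely the $G$-valued cocycle $\vr_{\al\bt}$ of $P$ acting on $W$ through the left $G$-action defined above, which exhibits $Y\to X$ as a $P$-associated bundle with structure group $G$ and typical fibre $W$.

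The main obstacle is purely bookkeeping: the four actions involved (right of $G$ on $P$, right of $H$ on $G$, left of $H$ on $V$, and the induced left of $G$ on $W$) have to be tracked coherently for all the well-definedness checks to close. Once the conventions are fixed, the individual verifications themselves are essentially tautological.
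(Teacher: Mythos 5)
Your proposal is correct and follows essentially the same route as the paper: both arguments amount to reorganizing the double quotient of $P\times G\times V$ so that $Y=(P\times V)/H$ is identified with $(P\times W)/G$, where $G$ acts on $W=(G\times V)/H$ by left translation on the first factor; your explicit mutually inverse maps $\Phi$ and $\Psi$ are just a more detailed rendering of the paper's chain of identifications $(pg',(g\rho,v))=(p,(g'g,\rho v))$. The only content of the paper's proof you omit is the closing observation that the resulting $G$-action on $W$ is the induced (nonlinear) representation of $G$ from $H$, which is illustrative rather than needed for the statement.
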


\begin{proof}
Let us represent a fiber bundle $P\to X$ as a $P$-associated
bundle
\be
P=(P\times G)/G, \qquad (pg',g)=(p,g'g), \qquad p\in P, \qquad
g,g'\in G,
\ee
whose typical fiber is a group space of $G$ on which a group $G$
acts by left multiplications. We can then represent the quotient
(\ref{bnn}) in a form
\be
&& Y=(P\times (G\times V)/H)/G, \\
&& (pg',(g\rho,v))= (pg',(g,\rho v))=(p,g'(g,\rho v))=(p,(g'g,\rho v)).
\ee
Therefore, $Y$ (\ref{bnn}) is a $P$-associated bundle with the
typical fiber $W$ (\ref{wes}) on which the structure group $G$
acts according to the law
\beq
g': (G\times V)/H \to (g'G\times V)/H. \label{iik}
\eeq
This is the so-called induced representation of a group $G$ by its
subgroup $H$ \cite{mack}. Given an atlas $\{(U_a,z_a)\}$ of a
principal $H$-bundle $G\to G/H$, the induced representation
(\ref{iik}) takes a form
\be
&& g': (\si,v)=(z_a(\si),v)/H\to (\si',v')=(g'z_a(\si),v)/H= \\
&& \qquad (z_b(\pi_{GH}(g'z_a(\si)))\rho',v)/H=
(z_b(\pi_{GH}(g'z_a(\si))),\rho'v)/H,\\
&& \rho'=z_b^{-1}(\pi_{GH}(g'z_a(\si)))g'z_a(\si)\in
H, \quad \si\in U_a, \quad  \pi_{GH}(g'z_a(\si))\in U_b.
\ee
For example, if $H$ is the Cartan subgroup of $G$, then the
induced representation (\ref{iik}) is a known nonlinear
realization of the group $G$ \cite{book09,col,jos}.
\end{proof}

\section{Lagrangian of matter fields}

Propositions \ref{LL4} -- \ref{tmp40} and Theorem \ref{LL1} imply
the following peculiarity of formulating Lagrangian gauge theory
with spontaneous symmetry breaking \cite{tmf14,higgs14}.

Let $P\to X$ be a principal bundle whose structure group $G$ is
reduced to a closed subgroup $H$. Let $Y$ be the
$P_\Si$-associated bundle (\ref{bnn}). A total configuration space
of gauge theory of $G$-connections on $P$ in the presence of
matter and Higgs fields is
\beq
J^1C\op\times_X J^1Y, \label{510f00}
\eeq
where $C$ is the quotient bundle (\ref{tmp30}) and $J^1Y$ is the
manifold of jets of a fiber bundle $Y\to X$. A total Lagrangian on
the configuration space (\ref{510f00}) is a sum
\beq
L_\mathrm{tot}=L_A +L_\mathrm{m} + L_\si \label{510f01}
\eeq
of a gauge field Lagrangian $L_A$, a matter field Lagrangian
$L_\mathrm{m}$,  and a Higgs field Lagrangian $L_\si$.

Because we do not specify gauge and Higgs fields and because their
Lagrangians can take rather different forms depending on a model,
for instance, in gauge gravitation theory and in Yang--Mills
theory, we here consider only a matter field Lagrangian
$L_\mathrm{m}$. By virtue of Proposition \ref{tmp40}, it
factorizes as
\beq
L_{\rm m}:J^1C\op\times_XJ^1Y\op\to^{\wt D}T^*X\op\ot_Y V_\Si
Y\to\op\w^nT^*X \label{lvv2}
\eeq
through the vertical differential $\wt D$ (\ref{7.10}). Moreover,
we can demonstrate that such a factorization is a necessary
condition for the gauge invariance of $L_{\rm m}$ under
automorphisms of a principal $G$-bundle $P\to X$ \cite{book09}.

However a problem is that the connection $A_\Si$ (\ref{510f04}) on
a fiber bundle $Y\to P/H$, which determines $\wt D$, is not a
dynamical variable in gauge theory. We therefore assume that the
Lie algebra of a group $G$ admits the Cartan decomposition
(\ref{g13}). In this case, any $G$-connection $A$ on a principal
bundle $P\to X$ determines an $H$-connection $\ol A_h$ on every
reduced subbundle $P^h$ (Theorem \ref{redt}). We can then prove
the following theorem \cite{tmf14,higgs14}.

\begin{theorem} \label{tmp50}
We have the $H$-connection $A_\Si$ (\ref{510f04}) on a fiber
bundle $Y\to P/H$ whose restriction $A_h=h^*A_\Si$ to the
$P^h$-associated bundle $Y^h$ coincides with an $H$-connection
$\ol A_h$ generated on $P^h$  by a connection $A$ on a principal
$G$-bundle $P\to X$.
\end{theorem}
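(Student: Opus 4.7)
The plan is to produce $A_\Si$ as the $\gH$-valued component of the $\cG$-valued connection form of $A$, and then to identify its restriction along a Higgs section with $\ol A_h$ of Theorem \ref{redt} in two different ways.

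First, I would view the $G$-connection $A$ as a $\cG$-valued connection 1-form $\omega$ on the total space $P$ and use the Cartan splitting (\ref{g13}) to write $\omega=\omega_\gH+\omega_{\mathfrak f}$. The claim is that $\omega_\gH$ is a principal $H$-connection on the $H$-bundle $P_\Si$ (\ref{b3194}). The vertical-vector condition is immediate because, for $\xi\in\gH$ generating the $H$-action along the fibres of $P_\Si$, one has $\omega(\xi^{*})=\xi\in\gH$. For equivariance, applying $R_\rho^{*}$ with $\rho\in H$ gives $R_\rho^{*}\omega=\mathrm{Ad}(\rho^{-1})\omega$; since $\mathrm{Ad}(H)\gH\subset\gH$ and the Cartan relation integrates to $\mathrm{Ad}(H)\mathfrak f\subset\mathfrak f$, projection onto $\gH$ commutes with $\mathrm{Ad}(\rho^{-1})$, so $R_\rho^{*}\omega_\gH=\mathrm{Ad}(\rho^{-1})\omega_\gH$. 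I would then rewrite this invariant object in the coordinate form (\ref{510f04}): given the atlas $\Psi_P$ that induces a chart $(x^\la,\si^m)$ on $\Si$ via Proposition \ref{tmp35}, the $dx^\la$-part of $\omega_\gH$ supplies the coefficients $\cA^a_\la$ and the $d\si^m$-part supplies $\cA^a_m$; associating the resulting principal connection to the typical fibre $V$ through the representation $\{I_a\}$ reproduces (\ref{510f05}) on $Y\to\Si$.

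Second, the coincidence $h^{*}A_\Si=\ol A_h$ reduces to comparing two restrictions of the same $\gH$-valued 1-form on $P$. On the one hand, Theorem \ref{redt} identifies $\ol A_h$ with the pull-back of $\omega_\gH$ to the subbundle $i_h:P^h\to P$ in an atlas $\Psi_h$ which, by Proposition \ref{tmp35}, is simultaneously an atlas of $P$. On the other hand, $A_\Si=\omega_\gH$ by construction, and by Theorem \ref{mos178} its pull-back $i_h^{*}A_\Si$ is an $H$-connection on $P^h$. Since both arise as the $\gH$-projection of $i_h^{*}\omega$, they agree on $P^h$, giving $i_h^{*}A_\Si=\ol A_h$ at the principal level; Proposition \ref{LL4} then transports this equality to the associated bundles as $h^{*}A_\Si=\ol A_h$ on $Y^h$.

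The main obstacle is the coordinate bookkeeping in the first step: one must verify that the decomposition of $\omega_\gH$ into $\cA^a_\la\,dx^\la+\cA^a_m\,d\si^m$ transforms consistently between overlapping charts, which is precisely what the $H$-valued transition functions of Proposition \ref{tmp35} ensure. Once those identifications are in place, the equivariance check of the first step and the pull-back identity of the second step are algebraic consequences of the Cartan relation (\ref{g13}).
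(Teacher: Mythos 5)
Your argument is correct, but it takes a genuinely different route from the paper's. You build $A_\Si$ invariantly as the $\gH$-component $\omega_\gH$ of the global connection form $\omega$ of $A$ on the total space $P$, verify that it is a principal $H$-connection on $P\to P/H$ (\ref{b3194}) via the reproduction of fundamental vector fields and the $\mathrm{Ad}(H)$-invariance of the splitting (\ref{g13}), and then note that both $h^*A_\Si$ and $\ol A_h$ are the $\gH$-projection of $i_h^*\omega$, so they coincide at the principal level, with Proposition \ref{LL4} transporting the equality to the associated bundles. The paper instead works entirely in jet coordinates: it writes $A=\ol A_h+\Theta$ with the $\mathfrak f$-valued part $\Theta$ characterized by $\Theta^p_\la J^m_p=D^A_\la h^m$, promotes $\Theta$ to a form on $J^1\Si\times_X J^1C$ affine in the jet coordinates $\si^m_\la$, and then solves the affine equations $\cA^a_m\si^m_\la+\cA^a_\la=a^a_\la-\Theta^a_\la$ for the components of $A_\Si$ pointwise on the configuration space. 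Your version is cleaner and makes the canonical nature of $A_\Si$ manifest (it is the classical Kobayashi--Nomizu fact that the $\gH$-part of a $G$-connection form is a principal connection on $P\to P/H$ for a reductive decomposition); what the paper's computation buys in exchange is the universal form $A_H$ (\ref{ljl}) on the configuration space (\ref{lvv}), in which the gauge potential $a^p_\la$ is itself a coordinate --- precisely what Section 6 needs to treat $A$ as a dynamical variable in the matter Lagrangian. Two minor caveats: the inclusion $\mathrm{Ad}(H)\mathfrak f\subset\mathfrak f$ follows from $[\gH,\mathfrak f]\subset\mathfrak f$ only for connected $H$ (otherwise it must be postulated), though Theorem \ref{redt} of the paper tacitly makes the same assumption; and to recover the full force of the paper's statement one should add that $A\mapsto\omega_\gH$ is affine in $A$, so your construction does extend over $J^1C$.
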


\begin{proof}
Let a principal reduced subbundle $P^h\subset P$, be given, and
let $\ol A_h$ be an $H$-connection on $P^h$ in Theorem \ref{redt}
generated by the $G$-connection $A$ on a principal bundle $P\to
X$. By virtue of Theorem \ref{mos176},  we can extend this
connection to a $G$-connection on $P$ for which $h$ is an integral
section of the associated connection
\be
\ol A_h=dx^\la\ot(\dr_\la +  A_\la^p J^m_p\dr_m)
\ee
on a $P$-associated bundle $\Si\to X$. With respect to the atlas
$\Psi^h$ (\ref{510ff}) of a principal bundle $P$ with $H$-valued
transition functions, a Higgs field $h$ takes values in the center
of a homogeneous space  $G/H$, and a connection $\ol A_h$ is
\beq
\ol A_h=dx^\la\ot(\dr_\la + A_\la^a e_a). \label{gyu}
\eeq
We then obtain
\beq
A=\ol A_h +\Theta= dx^\la\ot(\dr_\la + A_\la^a e_a) +\Theta_\la^b
dx^\la\ot e_b, \label{00a}
\eeq
where $\{e_a\}$ is a basis for the right Lie algebra ${\mathfrak
h}_r$ and $\{e_b\}$ is a basis of its complement ${\mathfrak
f}_r$. The decomposition (\ref{00a}) with respect to an arbitrary
atlas of a principal bundle $P$ takes a form
\be
A=\ol A_h +\Theta, \qquad \Theta=\Theta_\la^p dx^\la\ot e_p,
\ee
and satisfies the relation $\Theta_\la^p J_p^m=D^A_\la h^m$, where
$D^A_\la$ are the covariant derivatives with respect to the
associated connection $A$ on a fiber bundle $\Si\to X$. Let us
consider the covariant differential
\be
D=D^m_\la dx^\la\ot\dr_m=(\si_\la^m- A^p_\la J_p^m)dx^\la\ot\dr_m
\ee
with respect to the associated connection $A$ on $\Si\to X$. We
can represent this differential as a $V\Si$-valued form on the jet
manifold $J^1\Si$ of a fiber bundle $\Si\to X$. Because the
decomposition (\ref{00a}) holds for any section $h$ of a fiber
bundle $\Si\to X$, we obtain a $V_GP$-valued form
$\Theta=\Theta_\la^p dx^\la\ot e_p$ on $J^1\Si$, which satisfies
the equation
\beq
\Theta_\la^p J_p^m=D_\la^m. \label{nmb}
\eeq
As a result, we obtain a $V_GP$-valued form
\be
A_H=dx^\la\ot(\dr_\la + (A_\la^p -\Theta_\la^p)e_p)
\ee
on $J^1\Si$ whose restriction to every submanifold $J^1h(X)\subset
J^1\Si$ is the connection $\ol A_h$ (\ref{gyu}) written with
respect to the atlas $\Psi^h$ (\ref{aaq}). Because the
decomposition (\ref{00a}) holds, the equation (\ref{nmb}) admits a
solution for any $G$-connection $A$. We therefore have the
$V_GP$-valued form
\beq
A_H=dx^\la\ot(\dr_\la + (a^p_\la- \Theta_\la^p)e_p) \label{ljl}
\eeq
on the product $J^1\Si\times_X J^1C$ such that for any connection
$A$ and for any Higgs field $h$, the restriction of $A_H$
(\ref{ljl}) to
\be
J^1h(X)\times A(X)\subset J^1\Si\op\times_X J^1C
\ee
is the  connection $\ol A_h$ (\ref{gyu}), written with respect to
the atlas $\Psi^h$ (\ref{aaq}). Now let $A_\Si$ (\ref{510f04}) be
a connection on a principal $H$-bundle $P\to \Si$. This connection
determines a $V_\Si Y$-valued form
\beq
\wt \cD=dx^\la\ot(y^i_\la - (\cA^a_m\si_\la^m
+\cA^a_\la)I_a^i)\dr_i \label{510fh}
\eeq
(the covariant differential (\ref{7.10})) on the configuration
space (\ref{510f00}). We now assume that, for a given connection
$A$ on a principal $G$-bundle $P\to X$, the pull-back connection
$A_h=h^*A_{Y\Si}$ (\ref{510f06}) on $Y^h$ coincides with $\ol A_h$
(\ref{gyu}) for any $h\in\Si(X)$. By virtue of Proposition
\ref{tmp40}, we can then define components of the form
(\ref{510fh}) as follows. For a given point
\beq
(x^\la, a^r_\m,a^r_{\la\m}, \si^m,\si^m_\la,y^i,y^i_\la) \in
J^1C\op\times_X J^1Y, \label{lvv}
\eeq
let $h$ be a section of a fiber bundle $\Si\to X$ whose jet
$j^1_xh$ in $x\in X$ is $(\si^m,\si^m_\la)$, i.e.,
\be
h^m(x)=\si^m, \qquad \dr_\la h^m(x)=\si^m_\la.
\ee
Let the connection bundle $C$ and the Lie algebra fiber bundle
$V_GP$ be endowed with atlases associated to the atlas $\Psi^h$
(\ref{aaq}). We can then write
\beq
A_h=\ol A_h, \qquad \cA^a_m\si_\la^m +\cA^a_\la = a^a_\la-
\Theta_\la^a. \label{lvv1}
\eeq
These equations for the functions $\cA^a_m$ and $\cA^a_\la$ at the
point (\ref{lvv}) have a solution because $\Theta_\la^a$ are
affine functions in the jet coordinates $\si_\la^m$.
\end{proof}

Having the solution of the equation (\ref{lvv1}), we substitute it
in the covariant differential $\wt D$ (\ref{510fh}) requiring that
the matter field Lagrangian factorizes in the form (\ref{lvv2})
through the form $\wt D$ (\ref{510fh}), called the universal
covariant differential determined by a $G$-connection $A$ on a
principal bundle $P$. As a result, we obtain gauge theory of gauge
potentials of a group $G$, of matter fields with a symmetry
subgroup $H\subset G$ and of classical Higgs fields on the
configuration space (\ref{lvv}).

As mentioned above, an example of a classical Higgs field is a
metric gravitational field in gauge gravitation theory on natural
fiber bundles with the spontaneous symmetry breaking due to the
existence of Dirac spinor fields with the Lorentz spin group of
symmetries or by the geometric equivalence principle
\cite{iva,gg,tmf,sard11}. Describing spinor fields in terms of the
composite bundle (\ref{b3225}), we obtain their Lagrangian
(\ref{lvv2}) in the presence of a general linear connection; this
Lagrangian is invariant under general covariance transformations
\cite{book09,sard11,higgs14a}.

In a more general form, classical Higgs fields also were
considered in theory of spinor fields on the so-called
gauge-natural fiber bundles \cite{pal}.

\end{document}